\def\BibTeX{{\rm B\kern-.05em{\sc i\kern-.025em b}\kern-.08em
    T\kern-.1667em\lower.7ex\hbox{E}\kern-.125emX}}
\newcommand{\Id}{\mathit{Id}}
\newcommand{\bigO}{\mathcal{O}}
\newcommand{\dicke}[2]{\ket{\smash{D_{#2}^{#1}}}}
\NewDocumentCommand{\bbeta}{o}{
	\pmb{\beta}\IfValueT{#1}{_{[#1]}}
}
\NewDocumentCommand{\bgamma}{o}{
	\pmb{\gamma}\IfValueT{#1}{_{[#1]}}
}
\NewDocumentCommand{\bgstate}{o}{
	\ket{	\IfValueTF{#1}{\bbeta[#1]}{\bbeta},
		\IfValueTF{#1}{\bgamma[#1]}{\bgamma}
	}
}
\NewDocumentCommand{\bgstateT}{o}{
	\bra{	\IfValueTF{#1}{\bbeta[#1]}{\bbeta},
		\IfValueTF{#1}{\bgamma[#1]}{\bgamma}
	}
}
\newtheorem{theorem}{Theorem}
\title{Grover Mixers for QAOA: Shifting Complexity from Mixer Design to State Preparation%
\thanks{Research presented in this article was supported by the Center for Nonlinear Studies CNLS and the Laboratory Directed Research and Development program of Los Alamos National Laboratory under project numbers 20190495CR and 20200671DI.
\hfill LA-UR-20-23893
}}
\author{
	\IEEEauthorblockN{Andreas Bärtschi}
	\IEEEauthorblockA{\textit{CCS-3 Information Sciences and CNLS} \\
	\textit{Los Alamos National Laboratory}\\
	Los Alamos, NM 87544, USA \\
	baertschi@lanl.gov}
\and
	\IEEEauthorblockN{Stephan Eidenbenz}
	\IEEEauthorblockA{\textit{CCS-3 Information Sciences} \\
	\textit{Los Alamos National Laboratory}\\
	Los Alamos, NM 87544, USA \\
	eidenben@lanl.gov}
}
\begin{document}
\maketitle

\begin{abstract}
	We propose GM-QAOA, a variation of the Quantum Alternating Operator Ansatz (QAOA) that uses Grover-like selective phase shift mixing operators. 
	GM-QAOA works on any NP optimization problem for which it is possible to efficiently prepare an equal superposition of all feasible solutions;
	it is designed to perform particularly well for \emph{constraint} optimization problems, where not all possible variable assignments are feasible solutions. 
	GM-QAOA has the following features: (i) It is not susceptible to Hamiltonian Simulation error (such as Trotterization errors) as its operators can be implemented exactly using standard gate sets and (ii) Solutions with the same objective value are always sampled with the same amplitude. 

	We illustrate the potential of GM-QAOA on several optimization problem classes: 
	for permutation-based optimization problems such as the Traveling Salesperson Problem, we present an efficient algorithm to prepare a superposition of all possible permutations of $n$ numbers, defined on $O(n^2)$ qubits; 	
	for the hard constraint $k$-Vertex-Cover problem, and for an application to Discrete Portfolio Rebalancing, we show that GM-QAOA outperforms existing QAOA approaches.
\end{abstract}

\section{Introduction}

Combinatorial optimization is among the key applications to benefit from error-tolerant quantum computing as it could speed up practically and theoretically relevant problems such as \textsc{Minimum Traveling Salesperson} or \textsc{Maximum Satisfability}. While we generally do not expect to find polynomial-time quantum algorithms for such $NP$-hard optimization problems, quantum optimization algorithms that improve runtimes and/or quality (in terms of approximation ratios of found solutions vs. optimum solutions) over their classical counter parts may still exist. 

In this paper, we propose the polynomial-time Grover-Mixer Quantum Alternating Operator Ansatz (GM-QAOA) algorithm as a combination of the two leading quantum optimization approaches, namely Grover-based and Quantum Alternating Operator Ansatz (QAOA)-based~\cite{Hadfield2019}. Typically, quantum optimization algorithms that find optimum solutions at the cost of exponential running times with reduced base (usually $>2$) over their classical counter parts, such as \cite{ambainis2019quantum}, use Grover search as a key component. Quantum approximation algorithms that aim to find good approximate solutions to optimization problems in polynomial time typically follow the Quantum Alternating Operator Ansatz (QAOA) \cite{Farhi2014,Hadfield2019}.

\begin{figure*}[t]
	\centering
	\begin{adjustbox}{width=0.8\textwidth}%{width=\linewidth}
	%% Grover Mixer QAOA
	%% -----------------
	\newcommand{\zgate}[1]{\gate{Z^{-\beta_{#1}/\pi}}}	
	\begin{quantikz}[row sep={24pt,between origins},execute at end picture={
				\node[xshift=-6pt, yshift=-20pt] at (\tikzcdmatrixname-5-2) {
					\smash{\textcolor{red}{$\frac{1}{\sqrt{|F|}}\sum\limits_{x\in F} \ket{x}$}}
				};	
				\node[xshift=-4pt, yshift=-24pt] at (\tikzcdmatrixname-5-6) {
					$\underbrace{\hspace*{260pt}}_{p\text{ rounds with angles } \gamma_1,\beta_1,\ldots,\gamma_p,\beta_p}$
				};
			}]
		\lstick{\ket{0}}	
		& \gate[5]{U_S}\slice{}
		& \gate[5]{U_P(\gamma_k)}
		& \gate[5]{U_S^{\dagger}}
		\gategroup[5,steps=5,style={dashed,rounded corners,fill=blue!20, inner sep=0pt},background]{$U_M(\beta_k) = e^{-i\beta_k \ket{F}\bra{F}}$}
		& \targ{}
		& \ctrl{1}
		& \targ{}
		& \gate[5]{U_S}
		& \qw\midstick[5,brackets=none]{\ldots}
		& \meter{}\rstick[5]{\rotatebox{90}{$\bgstateT H_C \bgstate$}}
		\\	
		\lstick{\ket{0}}	&&&	& \targ{} 	& \ctrl{1}	& \targ{} 	&	&\qw	& \meter{}		\\	
		\lstick{\ket{0}}	&&&	& \targ{}	& \ctrl{1}	& \targ{}	&	&\qw	& \meter{}		\\	
		\lstick{\ket{0}}	&&&	& \targ{}	& \ctrl{1}	& \targ{}	&	&\qw	& \meter{}		\\	
		\lstick{\ket{0}}	&&&	& \targ{}	& \zgate{k}	& \targ{}	&	&\qw	& \meter{}	
	\end{quantikz}
	\end{adjustbox}
	\caption{GM-QAOA (Grover Mixer QAOA): The state preparation unitary $U_S$ preparing the 
		equal superposition of all feasible states $\ket{F} = \frac{1}{\sqrt{|F|}}\sum_{x\in F} \ket{x}$
		and its conjugate transpose $U_S^{\dagger}$
		are used to implement the Mixer $U_M(\beta) = e^{-i\beta \ket{F}\bra{F}}$.
	}
	\label{fig:gmqaoa}
\end{figure*}

\subsection{Overview of GM-QAOA: Grover-Mixer Quantum Alternating Operator Ansatz}
\label{sec:gm-qaoa}
In order to introduce GM-QAOA more formally, we need a bit of notation. We use the combinatorial optimzation problem of \textsc{Max k-Vertex Cover} as an illustrative example.
An instance $I^{kvc}$ of \textsc{Max k-Vertex Cover} is a tuple $(G, k)$ of a graph $G =(V,E)$ with vertex set $V = \{v_0, \ldots, v_{n-1}\}$ and edge set $E \subseteq V^2 = \{\forall 0 \leq i,j < n: (v_i, v_j)\}$ and 
we are to find a set $V' \subset V$ of exactly $k$ vertices, such that a maximum number of edges in $E$ have at least one end point in $V'$. I.e, the cost function is 
\begin{align}
    C(V') = \sum_{(v_i, v_j) \in E \text{ with } v_i \in V' \text{ or } v_j \in V'} 1 \label{eq:kvccost}
\end{align}

Just like standard QAOA, GM-QAOA takes as input a triple $(I, F, C)$ of optimization problem instance $I$, feasible solutions $F$ and cost function $C$ as follows:
\begin{itemize}
    \item 
 $I$ is an instance of an optimization problem, whose solutions are defined on $n$ binary variables $x_0, \ldots, x_{n-1}$. In our \textsc{Max k-Vertex Cover} instance $I^{kvc}$, the binary variables $x_i$ encode whether vertex $v_i$ is in a vertex cover solution $V'$ or not.
\item
Set $F$ is the set of feasible solutions, where a solution $x = x_{n-1}\ldots x_0 \in F \subseteq \left\{ 0,1 \right\}^n$ is the solution with variables taking values $x_i$. If  $F \subsetneq \left\{ 0,1 \right\}^n$, we call the optimization problem a \textit{constraint} optimization problem. In our \textsc{Max k-Vertex Cover} example of instance $I^{kvc}$, the set $F$ of feasible solutions consists of all binary strings  $x$ of length $n$ with exactly $k$ variables set to 1, or alternatively speaking,  all length-$n$ strings of Hamming weight $k$. 
\item
Cost function $C$ assigns a cost value to each feasible solution $x$ as given above for instance $I^{kvc}$, where $C(x)$ is the number of edges with at least one end point vertex with corresponding $x_i =1 $ in solution $x$. The transformation $x_i \mapsto (1-\sigma_i^z)/2$ yields the cost Hamiltonian
$$ H_C = C((1-\sigma_0^z)/2, \ldots, (1-\sigma_{n-1}^z)),
$$
and a solution $x$ has objective value $\bra{x}H_C\ket{x}$.
\end{itemize}

As illustrated in Fig.~\ref{fig:gmqaoa}, the GM-QAOA circuit of level $p$ for input $(I,C, F)$ consists of
\begin{enumerate}
	\item a	 \emph{state preparation} unitary operator $U_S$ that creates (without measurement) an equal superposition of all feasible solutions in $F$:
	$$
	    U_S \ket{0} = \ket{F} := \frac{1}{\sqrt{|F|}}\sum\limits_{x\in F} \ket{x}
	$$
	\item	$p$ applications of alternating \emph{phase separation} and \emph{mixing} unitaries $U_P$, $U_M$,
	\item	and a final measurement in the computational basis. 	
\end{enumerate}
The phase separator and mixing unitaries are parameterized each by $p$ real numbers (angles) $\bgamma = (\gamma_1, \ldots, \gamma_p)^T$ and $\bbeta = (\beta_1, \ldots, \beta_p)^T$; we write the final state before measurement as 
\[	\bgstate := U_M(\beta_p) U_P(\gamma_p) \cdots U_M(\beta_1) U_P(\gamma_1) U_S \ket{0}^{\otimes n}.	\]

The phase separator unitary $U_P$ should be diagonal in the computational basis. We use $U_P(\gamma) = e^{-i\gamma H_C}$ (modulo global phases), where $H_C$ is the cost Hamiltonian. As in standard QAOA, the angle vectors $\bbeta, \bgamma$ need to be optimized over in an outer-loop, thus GM-QAOA remains a variational algorithm; finding optimum or near-optimum angles can be done through a variety of optimization techniques \cite{nasa,eidenbenz2019quantum} or even theoretical analysis of individual values \cite{Farhi2015}.

The key innovation of GM-QAOA lies in the mixing unitary $U_M$ as we focus on state preparation to prepare an equal superposition of all feasible states and design a mixing unitary based thereon. 
In particular, once we have an efficient state preparation unitary $U_S$ such that $U_S\ket{0} = \ket{F} := |F|^{-1/2} \sum_{x\in F}\ket{x}$,
we can directly implement a variational (with parameter $\beta$) mixing unitary
\begin{align}
	U_M(\beta)	
	&= e^{-i\beta\ket{F}\bra{F}}
\end{align}
with
\begin{align}
	e^{-i\beta\ket{F}\bra{F}}
	&= \sum_{k=0}^{\infty} \frac{(-i\beta)^k (\ket{F}\bra{F})^k}{k!}    \nonumber \\
	&= \Id + \sum_{k=1}^{\infty} \frac{(-i\beta)^k}{k!}\ket{F}(\langle F|F\rangle)^{k-1}\bra{F} \nonumber\\
	&=	\Id - \left( 1-e^{-i\beta} \right)\ket{F}\bra{F} \nonumber\\
		&= 	U_S \left(\Id - \left( 1-e^{-i\beta} \right) \ket{0}\bra{0}\right) U_S^{\dagger}.	\label{eq:grover-mixer}
\end{align}

Such mixing unitaries resemble the well-known diffusion operators used in amplitude amplification with the phase shift $e^{-i\beta}$ replacing phase inversion (i.e. phase shifts of $-1 = e^{-i\pi}$). 
Thus the mixing unitary can be implemented using the one application of unitaries $U_S, U_S^{\dagger}$ each, two layers of $X$-gates and a multi-control phase-shift gate $Z^{-\beta/\pi} = \left(\begin{smallmatrix} 1 & 0 \\ 0 & e^{-i\beta} \end{smallmatrix}\right)$, which can be implemented in linear depth and size, as shown in Section~\ref{sec:proof}.
In fact, such selective phase shift unitaries have been introduced by Grover with angles $\beta = \pm \pi/3$ for fixed-point quantum search \cite{grover_fixed_point} and were later generalized to a larger discrete set of angles by Yoder et al. \cite{yoder2014fixed}. We discuss the key properties of GM-QAOA that follow from this definition.

\subsubsection{Property 1: Computing Equal Feasible State Superpositions}
If we limit ourselves to polynomial-size circuits, GM-QAOA relies on the existence of a polynomial state preparation algorithm for unitary $U_S$ to create an equal superposition of all feasible solutions $F$ of input instance $I$. It is not at all obvious that such quantum algorithms exist for all combinatorial optimization problems. At first glance, three categories of optimization problems exist with respect to the feasible solution space $F$:
\begin{enumerate}
    \item All solutions are feasible (i.e, $F = \left\{ 0,1 \right\}^n$, we call the problem unconstrained). In the original Quantum Approximate Optimization Algorithm (QAOA), Farhi et. al. studied unconstrained binary optimization problems such as \textsc{MaxCut}~\cite{Farhi2014} and \textsc{MAxE3Lin2}~\cite{Farhi2015}. We trivially get an equal superposition of all computational basis states $\ket{+}^{\otimes n} = H^{\otimes n} \ket{0}^{\otimes n}$ in circuit depth 1 (Fig.~\ref{fig:qaoa1}).
    \item The optimization problem is constrained (i.e., $F \subset \left\{ 0,1 \right\}^n$) and there exists a polynomial-sized quantum circuit to implement $U_S$. For example, Dicke-states (superpositions of all equal Hamming-weight states) can be computed in polynomial time \cite{dickestates} and serve as $U_S$ for a large set of optimization problems, such as \textsc{Max k-Vertex Cover} or \textsc{Max k-Set Cover}. 
    \item No polynomial-time quantum algorithm is known to compute $U_S$. We conjecture that problems such as \textsc{Max Clique} belong to this category.
\end{enumerate}

GM-QAOA performs best on problems of Category 2, which will be the focus of this paper, but can also handle problems in Category 1 (albeit at larger depth of the mixing unitary). As long as we stick to polynomial-size circuits, GM-QAOA will not handle Category-3 problems.

\subsubsection{Property 2: Mixing equal solutions at equal amplitude}
Our Grover mixer unitaries are such that GM-QAOA results in final states $\bgstate$ where any two basis states $x,y \in F$ with the same objective value, that is $C(x) = C(y)$ or equivalently 
$\bra{x}H_C\ket{x} = \bra{y}H_C\ket{y}$, have the same amplitude in $\bgstate$. To the best of our knowledge, this is the first mixing operator that has this property (even including Category 1). We formally prove this result in Section \ref{sec:proof}.

\begin{figure}[t!]
	\centering
	\newcommand{\rxgate}[1]{\gate{e^{-i\beta_{#1}X}}}
	\begin{adjustbox}{width=\linewidth}
	%% Original QAOA
	%% -------------
	\begin{quantikz}[row sep={24pt,between origins},execute at end picture={
				\node[xshift=-8pt, yshift=-18pt] at (\tikzcdmatrixname-5-2) {
					\smash{\textcolor{red}{$\frac{1}{\sqrt{2^n}}\sum\limits_{\text{all }x} \ket{x}$}}
				};
				\node[yshift=18pt] at (\tikzcdmatrixname-1-2) {$U_S$};
				\node[yshift=18pt] at (\tikzcdmatrixname-1-3) {$U_P(\gamma_1)$};
				\node[yshift=18pt] at (\tikzcdmatrixname-1-4) {$U_M(\beta_1)$};
				\node[yshift=18pt] at (\tikzcdmatrixname-1-6) {$U_M(\beta_p)$};
				\node[xshift=26pt, yshift=-24pt] at (\tikzcdmatrixname-5-4) {
					$\underbrace{\hspace*{210pt}}_{p\text{ rounds with angles } \gamma_1,\beta_1,\ldots,\gamma_p,\beta_p}$
				};
			}]
		\lstick{\ket{0}}	
		& \gate{H}\slice{}	
		& \gate[5]{e^{-i\gamma_1 H_C}}
		& \rxgate{1}
		& \qw\midstick[5,brackets=none]{\ \ldots\ }
		& \rxgate{p}
		& \meter{}\rstick[5]{\rotatebox{90}{$\bgstateT H_C \bgstate$}}
		\\	
		\lstick{\ket{0}}	& \gate{H}	&	& \rxgate{1}	&\qw	& \rxgate{p}	& \meter{}		\\	
		\lstick{\ket{0}}	& \gate{H}	&	& \rxgate{1}	&\qw	& \rxgate{p}	& \meter{}		\\	
		\lstick{\ket{0}}	& \gate{H}	&	& \rxgate{1}	&\qw	& \rxgate{p}	& \meter{}		\\	
		\lstick{\ket{0}}	& \gate{H}	&	& \rxgate{1}	&\qw	& \rxgate{p}	& \meter{}	
	\end{quantikz}
	\end{adjustbox}
	\caption{Original Quantum Approximate Optimization Algorithm for unconstrained problems. 
	State Preparation and a Transverse Field Hamiltonian Mixer can be implemented with single-qubit gates in depth 1.}
	\label{fig:qaoa1}
\end{figure}

\subsubsection{Property 3: No Hamiltonian Simulation error}
The GM-QAOA algorithm can be analyzed and understood purely in terms of Grover-like paradigms and elementary Pauli gates, which  -- as an aside -- implies that the reader does not need a background in Hamiltonian simulation principles in order to understand GM-QAOA. 
It follows as a benefit  that GM-QAOA does not suffer from numerical errors that Hamiltonian simulation methods, such as Trotterization~\cite{lloyd1996trotterization}, Local Combination of Unitaries~\cite{berry2015simulating}, or Quantum Walk based simulation~\cite{berry2015hamiltonian}, introduce. 
This in turn decreases the circuit complexity of GM-QAOA, making it more likely to succeed in a noisy environment, even though it may not particularly NISQ friendly.

\subsection{GM-QAOA vs. standard QAOA}
In order to appreciate GM-QAOA's standing among quantum optimization algorithms, we contrast it to earlier work. 
In the original Quantum Approximate Optimization Algorithm, Farhi et. al. studied \emph{unconstrained} binary optimization problems, for which any solution is feasible and we get an equal superposition of all computational basis states $\ket{+}^{\otimes n} = H^{\otimes n} \ket{0}^{\otimes n}$ in circuit depth 1. Fig.~\ref{fig:qaoa1} shows an overview of this original QAOA approach.

Hadfield et. al. extended QAOA into a general framework~\cite{Hadfield2019}, renamed to Quantum Alternating Operator Ansatz,  to cover a wide range of combinatorial optimization problems, including constraint problems.  
Fig.~\ref{fig:qaoa2} shows an overview of the Hadfield QAOA approach. 
Unlike GM-QAOA, Hadfield QAOA recommends for state preparation that $U_S$ should efficiently create a superposition of \emph{some feasible states} in $F$. 
Hadfield defines several classes of mixing unitaries that are suited to different classes of optimization problems. In keeping with the tradition of classical discrete optimization, the QAOA ansatz requests that mixing unitaries $U_M$ preserve the feasible subspace and provide transitions between all pairs of feasible states, i.e. for every feasible state $y\in F$ we have:	``there exists an angle $\beta^*$ such that $\lvert\bra{x} U_M(\beta^*) \ket{y}\rvert > 0$'' if and only if ``$x\in F$''.

\begin{figure}[t]
	\centering
	\begin{adjustbox}{width=\linewidth}
	%% Hadfield QAOA
	%% -------------
	\begin{quantikz}[row sep={24pt,between origins},execute at end picture={
				\node[yshift=-24pt] at (\tikzcdmatrixname-5-2) {
					\smash{\textcolor{red}{some feasible $\ket{x}$}}
				};
				\node[xshift=-4pt, yshift=-20pt] at (\tikzcdmatrixname-5-5) {
					$\underbrace{\hspace*{144pt}}_{p\text{ rounds with angles } \gamma_1,\beta_1,\ldots,\gamma_p,\beta_p}$
				};
			}]
		\lstick{\ket{0}}	
		& \qw\slice{}\gategroup[5,steps=2,style={dashed,rounded corners, inner sep=0pt},background]{$U_S$}
		& \gate[5]{U_M(\beta_0)}
		& \gate[5]{U_P(\gamma_k)}
		& \gate[5]{U_M(\beta_k)}
		& \qw\midstick[5,brackets=none]{\ldots}
		& \meter{}\rstick[5]{\rotatebox{90}{$\bgstateT H_C \bgstate$}}
		\\	
		\lstick{\ket{0}}	& \qw		&	&	&	&\qw	& \meter{}		\\	
		\lstick{\ket{0}}	& \targ{}	&	&	&	&\qw	& \meter{}		\\	
		\lstick{\ket{0}}	& \qw		&	&	&	&\qw	& \meter{}		\\	
		\lstick{\ket{0}}	& \targ{}	&	&	&	&\qw	& \meter{}	
	\end{quantikz}
	\end{adjustbox}
	\caption{Quantum Alternating Operator Ansatz for Constraint Optimization Problems with a focus on Mixer Design. 
		An intial Mixer application -- applied to an easily prepared single feasible state -- is also used for State Preparation.}
	\label{fig:qaoa2}
\end{figure}
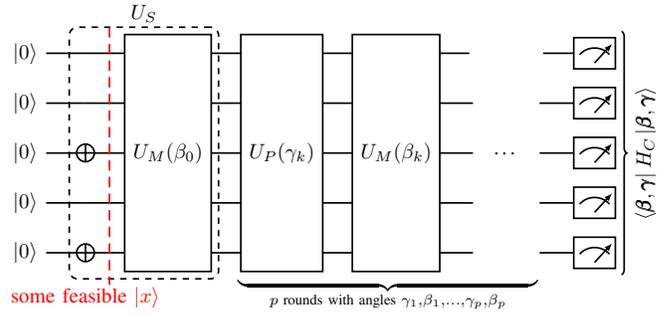

As for initial state preparation, the standard approach by Hadfield is to start in a \emph{ single} feasible computational basis state (implementable in depth 1 with Pauli $X$ gates) and to apply an initial mixing unitary $U_M(\beta_0)$. In this case, the efficiency of the initial state preparation corresponds to the efficiency of the mixer implementation. 
In particular, it is highly non-trivial to create mixer unitaries that assign amplitude to all feasible states starting from just a single feasible basis state; in practice, the proposed QAOA mixers do not achieve such perfect mixing in a single iteration, particularly when restricted to polynomial circuit size, thus the mixing across all feasible states happens only after several iterations of mixer and phase separator, which in turn makes the entire QAOA circuit longer and thus less NISQ-friendly. 

Numerical simulations for applications with $XY$-model mixers such as \textsc{Max-k-colorable Subgraph} and \textsc{Max-k-VertexCover} found
that starting in an equal superposition of all feasible states (e.g. $W$- and Dicke-States for the mentioned examples) offers a benefit over a random computational basis state choice followed by an angle-optimized initial mixing unitary $U_M(\beta_0)$~\cite{nasa,eidenbenz2019quantum}. 

In a sense, our GM-QAOA follows a reverse approach:
Focus on state preparation to prepare an equal superposition of all feasible states and make use of Grover's selective phase shift operator to get a relatively simple mixing unitary that always remains and provides good transition properties in the feasible solution space $F$. We thus shift the complexity from the mixing unitary into the preparation of $U_S$. This works well and brings a real benefit only if $U_S$ (and $U_S^\dagger$) can be constructed in polynomial time.
Hadfield's QAOA ansatz is general enough that GM-QAOA could be considered a special case of the ansatz for such problems.

\paragraph*{Related Work}
The original Grover quantum search algorithm~\cite{grover1996} with phase inversions $e^{-i\pi \ket{t}\bra{t}}, e^{-i\pi \ket{+}\bra{+}^{\otimes n}}$ around a target state $\ket{t}$ and an initial state $\ket{+}^{\otimes n}$ has a ``Soufflé'' problem~\cite{Brassard1997}, in which too few or two many iterations ``undercook'' (or ``overcook'', respectively) the amplitude of the target state, leading to a sampling of mostly unmarked states. The same holds true for its generalization of amplitude amplification~\cite{brassard2000}, which combines the state preparation and diffusion operator of Grover's quantum search with state preparation unitaries $U_S$ that correspond to, e.g., a classical randomized algorithm or a restriction of the Hilbert space to a feasible subspace. 
This was addressed by Grover himself with a fixed-point quantum search algorithm~\cite{grover_fixed_point} that replaced selective phase inversions with selective phase shifts of $\pm \pi/3$. This algorithm converges to the target state $\ket{t}$, albeit at a loss of the quadratic speedup. Using a larger set of angles, Yoder et~al.~\cite{yoder2014fixed} later recovered the quadratic speedup by replacing exact convergence with convergence up to a tunable error parameter.

Variational learning of phases for quantum search~\cite{biamonte2018} and Grover mixers for QAOA have recently been studied for \emph{unconstrained} problems, such as for \textsc{Max-SAT}~\cite{biamonte2020} and for counting of weighted ground states~\cite{sundar2019}. The Hamiltonian 
$\ket{+}\bra{+}^{\otimes n} \cong \Id + \sum_i \sigma_i^x + \sum_{\langle i,j \rangle} \sigma_i^x \sigma_j^x + \sum_{\langle i,j,k \rangle} \sigma_i^x \sigma_j^x \sigma_k^x \ldots$ has also been proposed for fair sampling in quantum annealing~\cite{Matsuda2009}.

\paragraph*{Outline}
In the remainder of the paper, we first formally show the properties of the GM-QAOA mixers in Section~\ref{sec:proof}.
The task of creating non-trivial starting states, which are typically highly entangled states, is of independent interest as well.
In particular, they can also be used for efficient implementations 
of starting states and oracle operators in Grover Search algorithms for these problems.
We illustrate the potential of our method on various problems, such as permutation-based optimization such as the Traveling Salesperson Problem,
optimization with hard constraints such as $k$-Vertex-Cover, and an application to portfolio rebalancing in Section~\ref{sec:app}. 
We describe efficient state preparation circuits and compare the resulting mixers with current proposals for mixing unitaries for these problems.

\section{GM-QAOA: Proofs and Implementation}
\label{sec:proof}

For a vector $\pmb{v}$, denote by $\pmb{v}_i$ the $i$-th component of $\pmb{v}$.
In particular, we denote by 
$\bgstate_i$ 
the amplitude of the $i$-th computational basis state.

\begin{theorem}
	Consider any GM-QAOA circuit with a state preparation unitary $U_S\colon \ket{0}^{\otimes n} \mapsto \ket{F}$, phase separators $U_P(\gamma) = e^{-i\gamma H_C}$, and Grover mixers $U_M(\beta) = e^{-i\beta\ket{F}\bra{F}} = U_S (\Id-(1-e^{-i\beta})\ket{0}\bra{0}) U_S^{\dagger}$. 
	Then we have final amplitudes
	\begin{description}
		\item[$\bgstate_z = 0$] \qquad\qquad\ if $z \notin F$ (staying in the feasible subspace),
		\item[$\bgstate_x = \bgstate_y$] \qquad\qquad\ whenever $x,y \in F$ have the same objec-\newline\hspace*{40pt}
		tive value $\bra{x}H_C\ket{x} = \bra{y}H_C\ket{y}$.
	\end{description}
\end{theorem}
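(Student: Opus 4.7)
My plan is to prove both claims together by induction on the round index $k$, tracking how each of $U_S$, $U_P(\gamma)$, and $U_M(\beta)$ acts on a state supported by the feasible subspace. First, for the feasibility-preservation claim, I would note that $U_S\ket{0}^{\otimes n} = \ket{F}$ is by construction supported on $\operatorname{span}\{\ket{x}\colon x\in F\}$; that $U_P(\gamma) = e^{-i\gamma H_C}$ is diagonal in the computational basis and therefore preserves this support; and that $U_M(\beta) = \Id - (1-e^{-i\beta})\ket{F}\bra{F}$, which by \eqref{eq:grover-mixer} only subtracts a scalar multiple of $\ket{F}$ from the current state, preserves the feasible subspace because $\ket{F}$ itself lies in it.

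For the equal-amplitude claim, the cleanest route I see is to partition $F$ by objective value. For each value $c$ in the spectrum of $H_C$ restricted to $F$, set $F_c := \{x\in F : \bra{x}H_C\ket{x} = c\}$ and $\ket{F_c} := |F_c|^{-1/2}\sum_{x\in F_c} \ket{x}$, so that $\ket{F} = \sum_c \sqrt{|F_c|/|F|}\,\ket{F_c}$ lies in the subspace $V := \operatorname{span}\{\ket{F_c}\}_c$. I would then verify that $V$ is invariant under both $U_P(\gamma)$, which multiplies each $\ket{F_c}$ by the uniform phase $e^{-i\gamma c}$, and $U_M(\beta)$, which maps $\ket{F_c}$ to $\ket{F_c} - (1-e^{-i\beta})\sqrt{|F_c|/|F|}\,\ket{F}$, a vector that stays in $V$ since $\ket{F}\in V$. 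Hence $\bgstate \in V$, and because each $\ket{F_c}$ assigns identical amplitude to every element of $F_c$ and vanishes elsewhere, the amplitudes of $\bgstate$ on any two $x,y\in F$ with $\bra{x}H_C\ket{x} = \bra{y}H_C\ket{y}$ automatically coincide.

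The main thing that needs careful checking is invariance of $V$ under the mixer, since it is the only non-diagonal operator in the circuit. This reduces to the one-line observation that $\langle F|\varphi\rangle$ is a scalar for every $\varphi\in V$, so the correction $-(1-e^{-i\beta})\langle F|\varphi\rangle\,\ket{F}$ is a scalar multiple of a vector already in $V$. Beyond this, the argument is really a bookkeeping check that the Grover-like structure of the mixer together with the diagonal action of the phase separator never breaks the cost-based symmetry established by the initial state $\ket{F}$; both claims of the theorem then drop out of this single invariant-subspace picture.
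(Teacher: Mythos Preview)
Your proof is correct and rests on the same two observations as the paper's: the phase separator is diagonal and hence acts by a uniform phase on each cost class, and the mixer only subtracts a scalar multiple of $\ket{F}$ from the current state. The difference is organizational. The paper runs a direct induction on individual amplitudes, computing at each round the arithmetic mean $\AM = |F|^{-1}\sum_{f\in F} e^{-i\gamma_k\,\obj(f)}\bgstate[k-1]_f$ and showing explicitly that the mixer sends each feasible amplitude to $e^{-i\gamma_k\,\obj(f)}\bgstate[k-1]_f - (1-e^{-i\beta_k})\,|F|^{-1}\AM$, from which both claims follow by inspection. You instead package the invariant as membership in the subspace $V = \operatorname{span}\{\ket{F_c}\}_c$ and verify that $U_P$ and $U_M$ each map $V$ into itself. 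Your route is a bit more abstract and arguably cleaner (no explicit mean computation is needed, and both theorem claims fall out of a single invariance statement), while the paper's explicit amplitude formula makes the Grover-like ``shift by a multiple of the mean'' structure visible and would be the natural starting point for any quantitative amplitude analysis.
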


\newcommand{\obj}{\mathrm{obj}}
\newcommand{\AM}{\mathit{AM}}
\begin{proof}
	We proceed by induction over the number of rounds~$k$, denoting by $\bgstate[k]$ the state after the first $k$ QAOA rounds.
	For feasible states $f\in F$ define $\obj(f) := \bra{f}H_C\ket{f}$.
	
	\paragraph*{State preparation} ($k=0$):
	We have $\bgstate[0] = \ket{F}$, and every feasible state $f \in F$ has amplitude $\bgstate[0]_z = \smash{\tfrac{1}{\sqrt{|F|}}}$, 
	while all other states $z\notin F$ have amplitude 0.
	
	\paragraph*{QAOA rounds} ($p \geq k > 0$): 
	Assume by induction hypothesis that for $x,y \in F$ with $\obj(x)=\obj(y)$ we have $\bgstate[k-1]_x = \bgstate[k-1]_y$,
	and that for $z \notin F$, $\bgstate[k-1]_z=0$.
	We show $\bgstate[k]_x = \bgstate[k]_y$ and $\bgstate[k]_z=0$.

	Since $U_P(\gamma_{k})$ is diagonal in the computational basis, it phase shifts feasible states $f\in F$ by $e^{-i\gamma_{k} \obj(f)}$ 
	and we get $( U_P(\gamma_{k})\bgstate[k] )_f = e^{-i\gamma_{k} \obj(z)} \bgstate[k-1]_f$,
	while staying in the feasible subspace. 
	Thus the amplitudes of the new state have an arithmetic mean of 
	\begin{align*}
		\AM &:=	\AM\left( U_P(\gamma_{k})\bgstate[k]\right)	\\
		&=	\frac{1}{|F|} \sum_{f \in F} e^{-i\gamma_{k} \obj(f)}\bgstate[k]_f	\\		
		&=	\sqrt{|F|} \cdot \bra{F} U_P(\gamma_{k})\bgstate[k].
	\end{align*}
	After application of $U_M(\beta_{k}) = \left(\Id - (1-e^{-i\beta_{k}})\ket{F}\bra{F}\right)$, we stay in the feasible subspace $F$:
	\begin{align*}
		\bgstate[k]	&=	U_M(\beta_{k}) U_P(\gamma_{k})\bgstate[k-1]\\	
		\		&=	U_P(\gamma_{k})\bgstate[k-1] - \frac{(1-e^{-i\beta_{k}})}{\sqrt{|F|}} \AM \ket{F}	
	\end{align*}
	Thus we have for $x,y\in F$ with $\obj(x) = \obj(y)$ that
	\begin{align*}
		\bgstate[k]_x	&=	e^{-i\gamma_{k}\obj(x)}\bgstate[k-1]_x - \tfrac{(1-e^{-i\beta_{k}})}{\sqrt{|F|}} \AM	\\
		\		&=	e^{-i\gamma_{k}\obj(y)}\bgstate[k-1]_y - \tfrac{(1-e^{-i\beta_{k}})}{\sqrt{|F|}} \AM	\\
		\		&=	\bgstate[k]_y.\qedhere   
	\end{align*}
\end{proof}

\subsubsection*{Implementation of GM-QAOA phase separator and mixing unitaries}
Here we briefly review implementations for phase separators $U_P(\gamma) = e^{-i\gamma H_C}$,
and for multi-controlled phase-shift operators $Z^{t} = \left(\begin{smallmatrix} 1 & 0 \\ 0 & e^{i t*pi} \end{smallmatrix}\right)$ used in GM-QAOA's mixing unitaries. 

\begin{figure*}[th]
	\centering
	\begin{adjustbox}{width=\linewidth}
	\def\rowspace{24pt}
	\newcommand{\rzgate}[1]{\gate{R_z(#1 2\gamma)}}
	\begin{quantikz}[row sep={24pt,between origins},column sep=8pt,execute at end picture={
				% swap down
				\draw[thick] ($(\tikzcdmatrixname-2-4.east)-(0pt,0pt)$) to ($(\tikzcdmatrixname-2-4.west)-(0pt,\rowspace)$);
				\draw[thick] ($(\tikzcdmatrixname-2-4.east)-(0pt,\rowspace)$) to ($(\tikzcdmatrixname-2-4.west)-(0pt,2*\rowspace)$);
				\draw[thick] ($(\tikzcdmatrixname-2-4.east)-(0pt,2*\rowspace)$) to ($(\tikzcdmatrixname-2-4.west)-(0pt,0pt)$);	
				% swap up
				\draw[thick] ($(\tikzcdmatrixname-2-10.west)-(0pt,0pt)$) to ($(\tikzcdmatrixname-2-10.east)-(0pt,\rowspace)$);
				\draw[thick] ($(\tikzcdmatrixname-2-10.west)-(0pt,\rowspace)$) to ($(\tikzcdmatrixname-2-10.east)-(0pt,2*\rowspace)$);
				\draw[thick] ($(\tikzcdmatrixname-2-10.west)-(0pt,2*\rowspace)$) to ($(\tikzcdmatrixname-2-10.east)-(0pt,0pt)$);
			}]
		& \gate[6][24pt]{{\rotatebox{90}{\large $\mathclap{e^{-i\gamma (\sigma^z_1 \sigma^z_3 \sigma^z_4 - \sigma^z_2 \sigma^z_5 \sigma^z_6)}}$}}}
		& \midstick[6,brackets=none]{=}\qw& \qw		& \ctrl{1}	& \qw		& \qw		& \qw		& \ctrl{1}	& \qw		& \qw	\\
		& \qw		& \qw		& \gate[3]{}W	& \targ{}	& \ctrl{1}	& \qw		& \ctrl{1}	& \targ{}	& \gate[3]{}W	& \qw	\\
		& \qw		& \qw		& \qw		& \qw		& \targ{}	& \rzgate{}	& \targ{}	& \qw		& \qw		& \qw	\\
		& \qw		& \qw		& \qw		& \ctrl{1}	& \qw		& \qw		& \qw		& \ctrl{1}	& \qw		& \qw	\\
		& \qw		& \qw		& \qw		& \targ{}	& \ctrl{1}	& \qw		& \ctrl{1}	& \targ{}	& \qw		& \qw	\\
		& \qw		& \qw		& \qw		& \qw		& \targ{}	& \rzgate{-}& \targ{}	& \qw		& \qw		& \qw	
	\end{quantikz}
	\qquad\qquad
	\begin{quantikz}[row sep={24pt,between origins},column sep=8pt]
		& \ctrl{5}	& \midstick[6,brackets=none]{=}\qw
		& 				  \ctrl{5}	& \qw		& \qw		& \qw		& \ctrl{5}	& \qw		& \qw		& \gate[5]{+1}	& \qw			& \gate[5]{-1}	& \gate{Z^{t/32}}	& \qw		\\
		& \ctrl{}	& \qw		& \qw		& \qw		& \ctrl{4}	& \qw		& \qw		& \qw		& \ctrl{4}	& \qw		& \gate{Z^{-t/32}}	& \qw		& \gate{Z^{t/32}}	& \qw		\\
		& \ctrl{}	& \qw		& \ctrl{}	& \qw		& \qw		& \qw		& \ctrl{}	& \qw		& \qw		& \qw		& \gate{Z^{-t/16}}	& \qw		& \gate{Z^{t/16}}	& \qw		\\
		& \ctrl{}	& \qw		& \qw		& \qw		& \ctrl{}	& \qw		& \qw		& \qw		& \ctrl{}	& \qw		& \gate{Z^{-t/8}}	& \qw		& \gate{Z^{t/8}}	& \qw		\\
		& \ctrl{}	& \qw		& \ctrl{}	& \qw		& \qw		& \qw		& \ctrl{}	& \qw		& \qw		& \qw		& \gate{Z^{-t/4}}	& \qw		& \gate{Z^{t/4}}	& \qw		\\
		& \gate{Z^t}	& \qw		& \targ{}	& \gate{Z^{-t/4}}& \targ{}	& \gate{Z^{t/4}}& \targ{}	& \gate{Z^{-t/4}}& \targ{}	& \gate{Z^{t/4}}	& \qw			& \qw		& \qw			& \qw		
	\end{quantikz}
	\end{adjustbox}
	\caption{(left) Implementation of a phase separator $U_P(\gamma) = e^{-i\gamma H_C}$ for a cost Hamiltonian $H_C = \sigma^z_1\sigma^z_3\sigma^z_4 - \sigma^z_2\sigma^z_5\sigma^z_6$ using a swap network, parity computing CNOT stairs and Z-Rotations $R_z(2\gamma)$.
	(right) Overview of a compilation of a multi-controlled phase shift gate $Z^t$ to into single-qubit phase shift gates and a constant number of Multi-Toffoli and Increment gates, both of which can be further compiled into a linear number of 1- and 2-qubit gates. 
	}
	\label{fig:compilation}
\end{figure*}

The cost Hamiltonians $H_C$ arising in combinatorial optimization problems are diagonal 
in the computational basis and can be written as sums of local Pauli-Z products. For example, 
the objective function of \textsc{MaxE3Lin2} is a sum of 3-local terms 
$H_C = \sum d_{uvw} \sigma^z_u \sigma^z_v \sigma^z_w$ with $d_{uvw} = \pm1$.
Since Pauli-Z terms pairwise commute, we can implement each term individually and in arbitrary order,
\[ U_P(\gamma) = e^{-i\gamma \sum d_{uvw} \sigma^z_u \sigma^z_v \sigma^z_w} 
    = \prod e^{-i\gamma d_{uvw} \sigma^z_u \sigma^z_v \sigma^z_w}.\]
Assuming $u,v,w$ are neighboring qubits, $e^{-i\gamma d_{uvw} \sigma^z_u \sigma^z_v \sigma^z_w}$
can be implemented with a Z-rotation gate $R_z(2\gamma d_{uvw}) = e^{-i\gamma d_{uvw} \sigma^z}$
conjugated with stairs of CNOT gates computing and uncomputing the parity of $u,v,w$, 
see Figure~\ref{fig:compilation} (left).
On restricted hardware such as Linear Nearest Neighbor architectures, though, qubits appearing in the same Pauli-Z product are not neighboring qubits. 
In these cases, one would use generalized swap networks~\cite{OGorman2019} to bring
the qubits of each term into neighboring positions at some point during the swap network.
Depending on the number of terms in $H_C$, such a swap network may or may not asymptotically 
increase the size or depth to implement $U_P(\gamma)$, compared to a fully connected architecture.

For the asymptotics of a compilation of multi-controlled phase-shift gate (which is at the heart of a Grover selective phase shift mixer), we largely follow a decomposition 
which gives a linear number of 1- and 2-qubit gates~\cite{Gidney2015-Bootstrapping}, and explain how that method extends to LNN architectures.
In a first step, a multi-controlled $Z^t$-gate can be decomposed into a constant number of Multi-Toffoli, Increment/Decrement and single-qubit $Z^{t/2^i}$ gates, see Figure~\ref{fig:compilation} (right). The Multi-Toffoli gates are interleaved, hence to compile them down to 1- and 2-qubit gates, we can make use of the qubits between the controls as ``borrowed'' ancillas, which need to be returned to their state. 
This can be done for each Multi-Toffoli with a linear number of 1- and 2-qubit gates~\cite{Gidney2015-Toffolis}. 
At the point where the Increment and the Decrement gate come into play, the bottom qubit is ``freed up'' and can be used as a borrowed ancilla as well. With this, we can split an Increment (or a Decrement) gate again into a constant number of Increments and Multi-Toffolis on half the number of qubits~\cite{Gidney2015-Increments}.
As the Increment and the Decrement gate are only used to implement relative phase shifts, one may change the actual order of the wires (as long as the increment follows the original order) -- in this way interleaving the Multi-Toffolis and Increments again with wires of borrowable ancillas. Using these borrowable qubits again enables compilation into a linear number of 1- and 2-qubit gates~\cite{Gidney2015-Increments}, using similar ideas as the Van Rentergem adder~\cite{Rentergem2005}.
We remark that by using additional zeroed ancilla qubits, the constant factors 
of such a construction can significantly be reduced. In summary, we have:
\begin{theorem}
    Given a state preparation unitary $U_S\colon \ket{0}^{\otimes n} \mapsto \ket{F}$,
    the size and depth of a Grover mixing unitary
    $U_M(\beta) = e^{-i\beta\ket{F}\bra{F}} = U_S (\Id-(1-e^{-i\beta})\ket{0}\bra{0}) U_S^{\dagger}$ are bounded by $O(U_S + n)$, even on Linear Nearest Neighbor architectures.
    \label{thm:mixer-size}
\end{theorem}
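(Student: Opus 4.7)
The plan is to decompose the mixer according to the identity $U_M(\beta) = U_S\bigl(\Id - (1-e^{-i\beta})\ket{0}\bra{0}\bigr) U_S^{\dagger}$ already derived in equation~(\ref{eq:grover-mixer}), and then bound each of the three factors separately. The first and third factors cost exactly the stated size/depth of $U_S$ (using that a circuit's inverse has the same complexity as the original, by reversing the gate order and inverting each gate). What remains is to show that the middle factor, the selective phase shift on $\ket{0}^{\otimes n}$, can be realized in $O(n)$ size and depth even under linear nearest-neighbor (LNN) connectivity.

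For the middle factor, I would first sandwich a multi-controlled phase-shift gate $Z^{-\beta/\pi}$ acting on a target qubit (after flipping the role of $\ket{0}^{\otimes n}$ to $\ket{1}^{\otimes n}$) between two layers of $X$ gates on $n-1$ qubits; these $X$ layers have depth $1$ and size $n-1$. The core step is then to argue that a $(n{-}1)$-controlled $Z^t$ gate can be compiled into a linear number of $1$- and $2$-qubit gates, as sketched on the right of Figure~\ref{fig:compilation}. I would invoke the decomposition of Gidney into a constant number of Multi-Toffoli, Increment/Decrement, and single-qubit $Z^{t/2^i}$ gates, and then recursively reduce Multi-Toffolis and Increment/Decrement gates to linear-size circuits by using the intermediate qubits as borrowed ancillas, exactly as outlined in the preceding paragraph.

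The main obstacle, and the part requiring the most care, is LNN compatibility: borrowed-ancilla constructions usually assume arbitrary addressing, and the Increment/Decrement subroutines need to keep the borrowed wires interleaved appropriately. I would argue here that the Multi-Toffoli compilation only needs swaps within a contiguous block of qubits (so an LNN routing overhead is absorbed into the linear count), and that the freedom to permute the wires of Increment/Decrement (since only relative phases matter) allows the same interleaving trick used for Multi-Toffolis to be reapplied, in the spirit of the Van Rentergem adder. Combining these, the selective phase shift has $O(n)$ size and depth on LNN.

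Putting the three contributions together yields a total cost of $2\cdot O(U_S) + O(n) = O(U_S + n)$, which is exactly the claim of Theorem~\ref{thm:mixer-size}. I would close by remarking (as in the paragraph right before the theorem) that adding zeroed ancillas only improves the hidden constants, and that the LNN bound matches the fully connected one up to constant factors.
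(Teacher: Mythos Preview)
Your proposal is correct and follows essentially the same approach as the paper: the paper's justification for Theorem~\ref{thm:mixer-size} is the discussion immediately preceding it, which likewise decomposes $U_M(\beta)$ via equation~\eqref{eq:grover-mixer}, sandwiches a multi-controlled $Z^t$ between $X$-layers, and then invokes Gidney's bootstrapping decomposition into Multi-Toffolis and Increment/Decrement gates with borrowed ancillas, arguing LNN-compatibility via wire reordering in the spirit of the Van Rentergem adder. One minor slip: the $X$-layers act on all $n$ qubits (to map $\ket{0}^{\otimes n}$ to $\ket{1}^{\otimes n}$), not $n-1$, but this does not affect the $O(n)$ bound.
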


\section{Applications}
\label{sec:app}
In this section, we illustrate the potential of GM-QAOA on three different problem types:
Optimization problems with non-intersecting constraints such as \textsc{Max K-VertexCover} (with a single equality constraint),
permutation-based optimization problems such as the Traveling Salesperson Problem (\textsc{TSP}),
and applications to discrete portfolio rebalancing. 

To this end, we present for each type a construction of an efficient state preparation unitary $U_S$ that 
prepares an equal superposition of all feasible states, $U_S \ket{0\dots0} = \tfrac{1}{\sqrt{|F|}} \sum_{x\in F} \ket{x}$.
We then compare the resulting Grover Mixer with other Mixers, in terms of 
circuit size and depth as well as structure and quality of the Mixer.

In our constructions, we make use of a recent result~\cite{dickestates} on the preparation of Dicke states $\dicke{n}{k}$, 
which are equal superpositions of all $\binom{n}{k}$ many $n$-qubit computational basis states $x$ with Hamming weight $HW(x)=k$:

\begin{theorem}[Theorem 1 in \cite{dickestates}]
	Dicke states $\dicke{n}{k}$ can be prepared with a circuit of size $\bigO(n\cdot k)$ and depth $\bigO(n)$, 
	even on Linear Nearest Neighbor architectures.
	\label{thm:dicke}
\end{theorem}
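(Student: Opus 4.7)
My plan is to prove the theorem constructively by exploiting the recursive identity
\[
\dicke{n}{k} \;=\; \sqrt{\tfrac{n-k}{n}}\,\ket{0}\dicke{n-1}{k} \;+\; \sqrt{\tfrac{k}{n}}\,\ket{1}\dicke{n-1}{k-1},
\]
which reduces preparing $\dicke{n}{k}$ to preparing smaller Dicke states together with a single controlled rotation on the top wire. Unrolling $n-k$ times gives a triangular array of rotations, each sitting between adjacent qubits, and this triangle is exactly what the circuit will implement.

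First, I would design a nearest-neighbor building block $\mathit{SCS}_{n,k}$ (Split \& Cyclic Shift) that, for every $1\le\ell\le k$, maps the concentrated state $\ket{0^{n-\ell}1^\ell}$ to
\[
\sqrt{\tfrac{n-\ell}{n}}\,\ket{0}\ket{0^{n-\ell-1}1^\ell} \;+\; \sqrt{\tfrac{\ell}{n}}\,\ket{1}\ket{0^{n-\ell}1^{\ell-1}}.
\]
Such a block can be realized as a chain of at most $k$ three-qubit gadgets, each consisting of an $R_y$-rotation (with angle $2\arccos\sqrt{\ell/n}$ selected by the presence/absence of a $\ket{1}$ on a neighboring control) sandwiched between two \CNOT{}s on three consecutive wires. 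This gives $\mathit{SCS}_{n,k}$ size and depth $O(k)$, all on adjacent qubits, so it respects the LNN constraint.

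Second, starting from the seed $\ket{0^{n-k}1^k}$ (prepared with $k$ $X$-gates in depth $1$), I would apply the cascade $\mathit{SCS}_{n,k}\cdot \mathit{SCS}_{n-1,k-1}\cdots \mathit{SCS}_{k+1,1}$ on progressively shorter top-aligned windows of qubits. Induction on $n-k$ and the recursive identity show the final state is exactly $\dicke{n}{k}$. Summing block sizes gives the claimed size bound
\[
\sum_{i=k+1}^{n} O(\min(i-1,k)) \;=\; O(nk).
\]

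The main obstacle will be the depth. A naive sequential schedule costs $O(nk)$. Observing that the $j$-th gadget of $\mathit{SCS}_{i,\cdot}$ only needs data the $(j-1)$-th gadget of $\mathit{SCS}_{i-1,\cdot}$ has already released, I would pipeline consecutive blocks with a constant staggered offset. The bulk of the proof is the bookkeeping showing that in this diagonal-sweep schedule no two concurrent gadgets share a qubit and all remain on neighboring wires, yielding total depth $O(n+k)=O(n)$ and the LNN guarantee.
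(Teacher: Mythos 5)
The paper itself does not prove this statement at all --- it imports it verbatim as Theorem~1 of the cited reference~\cite{dickestates} --- so the only meaningful comparison is with that reference, and your plan is essentially a reconstruction of its inductive split-and-shift construction. As written, however, it has two genuine gaps. First, your orientation breaks the locality claim: you park the $\ell$ ones at the bottom of the window and split off the \emph{top} qubit, so the branch $\sqrt{\ell/n}\,\ket{1}\ket{0^{n-\ell}1^{\ell-1}}$ requires transporting a $1$ across $n-\ell$ wires. A chain of at most $k$ three-qubit gadgets on consecutive wires spans only $\bigO(k)$ wires, so it cannot implement this map when $n-\ell \gg k$; the claimed $\bigO(k)$ size, $\bigO(k)$ depth and nearest-neighbor property of $\mathit{SCS}_{n,k}$ therefore do not follow. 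The construction in \cite{dickestates} avoids this by always splitting the qubit \emph{adjacent} to the contiguous block of ones and cyclically shifting that block by one position, so every gadget acts on neighboring wires inside a $(k+1)$-qubit boundary region; you need that (or an equivalent) convention, not the top-versus-bottom one.

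Second, the cascade $\mathit{SCS}_{n,k}\cdot \mathit{SCS}_{n-1,k-1}\cdots \mathit{SCS}_{k+1,1}$ is mis-indexed and stops too early. After the first block the remaining window holds a superposition of concentrated states of weights $k$ and $k-1$, and after $j$ blocks of all weights between $k$ and $\max(0,k-j)$; by linearity every subsequent block must therefore implement the defining map for \emph{all} $\ell \le \min(\text{window size}-1,\,k)$, so the second index must stay at $k$ until the window has shrunk to $k+1$ qubits rather than decrease by one per step (with your indexing, the weight-$k$ branch entering $\mathit{SCS}_{n-1,k-1}$ is simply not handled). Moreover the cascade must continue on windows of size $k, k-1, \ldots, 2$: stopping at window size $k+1$ leaves the final $k$-qubit window in concentrated states $\ket{0^{k-\ell}1^{\ell}}$ instead of the Dicke states $\dicke{k}{\ell}$, so the output is not $\dicke{n}{k}$. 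Fixing both points recovers the circuit of \cite{dickestates}: roughly $n-1$ staircase blocks of size $\bigO(\min(\text{window},k))$, total size $\bigO(nk)$, which your pipelining argument then correctly compresses to depth $\bigO(n)$ on Linear Nearest Neighbor architectures.
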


In particular, that construction gives a linear-depth and -size circuit to construct $W_n$-states (symmetric $n$-qubit states of Hamming weight 1) 
and can also be used to prepare arbitrary symmetric states with maximum Hamming weight $k$.

%% Outline
\subsubsection*{Application Outline}
In the following subsections, we study \textsc{Max k-VertexCover}, the Traveling Salesperson Problem and Portfolio Rebalancing.
For each of these three applications, we give 
\begin{itemize}
	\item	a short introduction and the problem encoding,
	\item	an explicit construction of the state preparation unitary $U_S$,
		including clickable links to interactive implementations with the drag-and-drop
		quantum circuit simulator Quirk (\url{www.algassert.com/quirk}),
	\item	a discussion of improvements over other approaches,
	\item	and possible generalizations.
\end{itemize}

%%%% -----------
%%%% k-vertex-cover
%%%% -----------
\subsection{Max k-VertexCover}

We used $k$-Vertex Cover as our running example when introducing GM-QAOA in Section \ref{sec:gm-qaoa} giving a formal definition. The problem is finding a subset $V'$ of exactly $k$ vertices from a given graph, such that the number of edges with at least one end point in $V'$ is maximized.

\subsubsection{Problem encoding}
We let each vertex be represented by a binary variable $x_j$, where $x_j = 1$ if it is in the subset and $x_j = 0$ otherwise for solution $x$.  The objective function is given in graph form in Eq.~\ref{eq:kvccost}. To express this in unitary form, we set
\begin{align}
\begin{split}
	C(x) &= \sum_{(v_j, v_l)\in E} \text{OR}(x_j,x_l) \\
	&= \sum_{(v_j, v_l)\in E} 1 - (1-x_j)(1-x_l)
\end{split}
\label{eq:kvc-or}
\end{align}
Using the transformation $x_j \mapsto (1-\sigma_j^z)/2$, the phase separator  Hamiltonian becomes:
\begin{equation}
	H_C = \frac{1}{4}\sum_{(v_j, v_l)\in E} 3I - \sigma_j^z \sigma_l^z - 
	\sigma_j^z - \sigma_l^z.
	\label{phaseop}
\end{equation}

As $H_C$ simply consists of mutually commuting 1- and 2-local terms which are diagonal in the computational basis, the implementation of the corresponding Phase separator unitary $U_P(\gamma) = e^{-i\gamma H_C}$ is straightforward.

\subsubsection{State preparation}
Since the set  $F$ of feasible solutions contains exactly all solutions $x$ of Hamming weight $k$, we can use the Dicke state construction referenced in the previous section to create unitary $U_S$. As Dicke states can be prepared with $O(nk)$ gates in $O(n)$ depth (see, e.g. 
the preparation of $\dicke{5}{3}$ in an 
\href{https://algassert.com/quirk#circuit=%7B%22cols%22%3A%5B%5B1%2C1%2C%22X%22%2C%22X%22%2C%22X%22%5D%2C%5B%22Chance%22%2C%22Chance%22%2C%22Chance%22%2C%22Chance%22%2C%22Chance%22%5D%2C%5B%22Chance5%22%5D%2C%5B1%2C1%2C1%2C%22%E2%80%A2%22%2C%22X%22%5D%2C%5B1%2C1%2C1%2C%22~sqrt15%22%2C%22%E2%80%A2%22%5D%2C%5B1%2C1%2C1%2C%22%E2%80%A2%22%2C%22X%22%5D%2C%5B1%2C1%2C%22%E2%80%A2%22%2C1%2C%22X%22%5D%2C%5B1%2C1%2C%22~sqrt25%22%2C%22%E2%80%A2%22%2C%22%E2%80%A2%22%5D%2C%5B1%2C1%2C%22%E2%80%A2%22%2C1%2C%22X%22%5D%2C%5B1%2C%22%E2%80%A2%22%2C1%2C1%2C%22X%22%5D%2C%5B1%2C%22~sqrt35%22%2C%22%E2%80%A2%22%2C1%2C%22%E2%80%A2%22%5D%2C%5B1%2C%22%E2%80%A2%22%2C1%2C1%2C%22X%22%5D%2C%5B%22Chance5%22%5D%2C%5B1%2C1%2C%22%E2%80%A2%22%2C%22X%22%5D%2C%5B1%2C1%2C%22~sqrt14%22%2C%22%E2%80%A2%22%5D%2C%5B1%2C1%2C%22%E2%80%A2%22%2C%22X%22%5D%2C%5B1%2C%22%E2%80%A2%22%2C1%2C%22X%22%5D%2C%5B1%2C%22~sqrt24%22%2C%22%E2%80%A2%22%2C%22%E2%80%A2%22%5D%2C%5B1%2C%22%E2%80%A2%22%2C1%2C%22X%22%5D%2C%5B%22%E2%80%A2%22%2C1%2C1%2C%22X%22%5D%2C%5B%22~sqrt34%22%2C%22%E2%80%A2%22%2C1%2C%22%E2%80%A2%22%5D%2C%5B%22%E2%80%A2%22%2C1%2C1%2C%22X%22%5D%2C%5B%22Chance5%22%5D%2C%5B1%2C%22%E2%80%A2%22%2C%22X%22%5D%2C%5B1%2C%22~sqrt13%22%2C%22%E2%80%A2%22%5D%2C%5B1%2C%22%E2%80%A2%22%2C%22X%22%5D%2C%5B%22%E2%80%A2%22%2C1%2C%22X%22%5D%2C%5B%22~sqrt23%22%2C%22%E2%80%A2%22%2C%22%E2%80%A2%22%5D%2C%5B%22%E2%80%A2%22%2C1%2C%22X%22%5D%2C%5B%22Chance5%22%5D%2C%5B%22%E2%80%A2%22%2C%22X%22%5D%2C%5B%22~sqrt12%22%2C%22%E2%80%A2%22%5D%2C%5B%22%E2%80%A2%22%2C%22X%22%5D%2C%5B%22Chance5%22%5D%5D%2C%22gates%22%3A%5B%7B%22id%22%3A%22~sqrt12%22%2C%22name%22%3A%22%E2%88%9A1%2F2%22%2C%22matrix%22%3A%22%7B%7B%E2%88%9A%C2%BD%2C-%E2%88%9A%C2%BD%7D%2C%7B%E2%88%9A%C2%BD%2C%E2%88%9A%C2%BD%7D%7D%22%7D%2C%7B%22id%22%3A%22~sqrt13%22%2C%22name%22%3A%22%E2%88%9A1%2F3%22%2C%22matrix%22%3A%22%7B%7B%E2%88%9A%E2%85%93%2C-%E2%88%9A%E2%85%94%7D%2C%7B%E2%88%9A%E2%85%94%2C%E2%88%9A%E2%85%93%7D%7D%22%7D%2C%7B%22id%22%3A%22~sqrt23%22%2C%22name%22%3A%22%E2%88%9A2%2F3%22%2C%22matrix%22%3A%22%7B%7B%E2%88%9A%E2%85%94%2C-%E2%88%9A%E2%85%93%7D%2C%7B%E2%88%9A%E2%85%93%2C%E2%88%9A%E2%85%94%7D%7D%22%7D%2C%7B%22id%22%3A%22~sqrt14%22%2C%22name%22%3A%22%E2%88%9A1%2F4%22%2C%22matrix%22%3A%22%7B%7B%C2%BD%2C-%E2%88%9A%C2%BE%7D%2C%7B%E2%88%9A%C2%BE%2C%C2%BD%7D%7D%22%7D%2C%7B%22id%22%3A%22~sqrt24%22%2C%22name%22%3A%22%E2%88%9A2%2F4%22%2C%22matrix%22%3A%22%7B%7B%E2%88%9A%C2%BD%2C-%E2%88%9A%C2%BD%7D%2C%7B%E2%88%9A%C2%BD%2C%E2%88%9A%C2%BD%7D%7D%22%7D%2C%7B%22id%22%3A%22~sqrt34%22%2C%22name%22%3A%22%E2%88%9A3%2F4%22%2C%22matrix%22%3A%22%7B%7B%E2%88%9A%C2%BE%2C-%C2%BD%7D%2C%7B%C2%BD%2C%E2%88%9A%C2%BE%7D%7D%22%7D%2C%7B%22id%22%3A%22~sqrt15%22%2C%22name%22%3A%22%E2%88%9A1%2F5%22%2C%22matrix%22%3A%22%7B%7B%E2%88%9A%E2%85%95%2C-%E2%88%9A%E2%85%98%7D%2C%7B%E2%88%9A%E2%85%98%2C%E2%88%9A%E2%85%95%7D%7D%22%7D%2C%7B%22id%22%3A%22~sqrt25%22%2C%22name%22%3A%22%E2%88%9A2%2F5%22%2C%22matrix%22%3A%22%7B%7B%E2%88%9A%E2%85%96%2C-%E2%88%9A%E2%85%97%7D%2C%7B%E2%88%9A%E2%85%97%2C%E2%88%9A%E2%85%96%7D%7D%22%7D%2C%7B%22id%22%3A%22~sqrt35%22%2C%22name%22%3A%22%E2%88%9A3%2F5%22%2C%22matrix%22%3A%22%7B%7B%E2%88%9A%E2%85%97%2C-%E2%88%9A%E2%85%96%7D%2C%7B%E2%88%9A%E2%85%96%2C%E2%88%9A%E2%85%97%7D%7D%22%7D%2C%7B%22id%22%3A%22~sqrt45%22%2C%22name%22%3A%22%E2%88%9A4%2F5%22%2C%22matrix%22%3A%22%7B%7B%E2%88%9A%E2%85%98%2C-%E2%88%9A%E2%85%95%7D%2C%7B%E2%88%9A%E2%85%95%2C%E2%88%9A%E2%85%98%7D%7D%22%7D%2C%7B%22id%22%3A%22~SCS21%22%2C%22name%22%3A%22SCS2%2C1%22%2C%22circuit%22%3A%7B%22cols%22%3A%5B%5B%22%E2%80%A2%22%2C%22X%22%5D%2C%5B%22~sqrt12%22%2C%22%E2%80%A2%22%5D%2C%5B%22%E2%80%A2%22%2C%22X%22%5D%5D%7D%7D%2C%7B%22id%22%3A%22~SCS32%22%2C%22name%22%3A%22SCS3%2C2%22%2C%22circuit%22%3A%7B%22cols%22%3A%5B%5B1%2C%22%E2%80%A2%22%2C%22X%22%5D%2C%5B1%2C%22~sqrt13%22%2C%22%E2%80%A2%22%5D%2C%5B1%2C%22%E2%80%A2%22%2C%22X%22%5D%2C%5B%22%E2%80%A2%22%2C1%2C%22X%22%5D%2C%5B%22~sqrt23%22%2C%22%E2%80%A2%22%2C%22%E2%80%A2%22%5D%2C%5B%22%E2%80%A2%22%2C1%2C%22X%22%5D%5D%7D%7D%2C%7B%22id%22%3A%22~SCS43%22%2C%22name%22%3A%22SCS4%2C3%22%2C%22circuit%22%3A%7B%22cols%22%3A%5B%5B1%2C1%2C%22%E2%80%A2%22%2C%22X%22%5D%2C%5B1%2C1%2C%22~sqrt14%22%2C%22%E2%80%A2%22%5D%2C%5B1%2C1%2C%22%E2%80%A2%22%2C%22X%22%5D%2C%5B1%2C%22%E2%80%A2%22%2C1%2C%22X%22%5D%2C%5B1%2C%22~sqrt24%22%2C%22%E2%80%A2%22%2C%22%E2%80%A2%22%5D%2C%5B1%2C%22%E2%80%A2%22%2C1%2C%22X%22%5D%2C%5B%22%E2%80%A2%22%2C1%2C1%2C%22X%22%5D%2C%5B%22~sqrt34%22%2C%22%E2%80%A2%22%2C1%2C%22%E2%80%A2%22%5D%2C%5B%22%E2%80%A2%22%2C1%2C1%2C%22X%22%5D%5D%7D%7D%2C%7B%22id%22%3A%22~SCS53%22%2C%22name%22%3A%22SCS5%2C3%22%2C%22circuit%22%3A%7B%22cols%22%3A%5B%5B1%2C1%2C%22%E2%80%A2%22%2C%22X%22%5D%2C%5B1%2C1%2C%22~sqrt15%22%2C%22%E2%80%A2%22%5D%2C%5B1%2C1%2C%22%E2%80%A2%22%2C%22X%22%5D%2C%5B1%2C%22%E2%80%A2%22%2C1%2C%22X%22%5D%2C%5B1%2C%22~sqrt25%22%2C%22%E2%80%A2%22%2C%22%E2%80%A2%22%5D%2C%5B1%2C%22%E2%80%A2%22%2C1%2C%22X%22%5D%2C%5B%22%E2%80%A2%22%2C1%2C1%2C%22X%22%5D%2C%5B%22~sqrt35%22%2C%22%E2%80%A2%22%2C1%2C%22%E2%80%A2%22%5D%2C%5B%22%E2%80%A2%22%2C1%2C1%2C%22X%22%5D%5D%7D%7D%2C%7B%22id%22%3A%22~SCS54%22%2C%22name%22%3A%22SCS5%2C4%22%2C%22circuit%22%3A%7B%22cols%22%3A%5B%5B1%2C1%2C1%2C%22%E2%80%A2%22%2C%22X%22%5D%2C%5B1%2C1%2C1%2C%22~sqrt15%22%2C%22%E2%80%A2%22%5D%2C%5B1%2C1%2C1%2C%22%E2%80%A2%22%2C%22X%22%5D%2C%5B1%2C1%2C%22%E2%80%A2%22%2C1%2C%22X%22%5D%2C%5B1%2C1%2C%22~sqrt25%22%2C%22%E2%80%A2%22%2C%22%E2%80%A2%22%5D%2C%5B1%2C1%2C%22%E2%80%A2%22%2C1%2C%22X%22%5D%2C%5B1%2C%22%E2%80%A2%22%2C1%2C1%2C%22X%22%5D%2C%5B1%2C%22~sqrt35%22%2C%22%E2%80%A2%22%2C1%2C%22%E2%80%A2%22%5D%2C%5B1%2C%22%E2%80%A2%22%2C1%2C1%2C%22X%22%5D%2C%5B%22%E2%80%A2%22%2C1%2C1%2C1%2C%22X%22%5D%2C%5B%22~sqrt45%22%2C%22%E2%80%A2%22%2C1%2C1%2C%22%E2%80%A2%22%5D%2C%5B%22%E2%80%A2%22%2C1%2C1%2C1%2C%22X%22%5D%5D%7D%7D%2C%7B%22id%22%3A%22~U53%22%2C%22name%22%3A%22U5%2C3%22%2C%22circuit%22%3A%7B%22cols%22%3A%5B%5B1%2C%22~SCS53%22%5D%2C%5B%22~SCS43%22%5D%2C%5B%22~SCS32%22%5D%2C%5B%22~SCS21%22%5D%5D%7D%7D%5D%7D
}{interactive example}), according
to Theorem~\ref{thm:mixer-size} so can a Grover mixer based on Dicke state preparation.

\subsubsection{Improvement}
Previous approaches in QAOA problems with Hamming weight equality constraints
have looked at mixing unitaries based on XY-Hamiltonians $\sigma^x_i \sigma^x_j + \sigma^y_i \sigma^y_j$, namely a ring mixer and a complete mixer, named after 
the pairs of qubits on which the Hamiltonian acts:
\newcommand{\xyring}{H_{XY\text{-Ring}}}
\newcommand{\xyclique}{H_{XY\text{-Clique}}}
\begin{align*}
     \xyring &= \sum_{i} \sigma^x_i \sigma^x_{i+1} + \sigma^y_i \sigma^y_{i+1} \\
     \xyclique &= \sum_{i<j} \sigma^x_i \sigma^x_{j} + \sigma^y_i \sigma^y_{j}
\end{align*}
The corresponding mixing unitaries $e^{-i\beta \xyring}$, $e^{-i\beta \xyclique}$
both preserve the feasible subspace but provide different transitions between feasible
states, due to the ``better connectivity'' of the complete graph mixing Hamiltonian.
An experimental study on instances of \textsc{Max-k-VertexCover}~\cite{eidenbenz2019quantum} has shown that QAOA with a complete mixer outperforms QAOA with a ring mixer, with the advantage most prominent for an intermediate number of QAOA rounds, and for Dicke states as starting states 
(compared to starting in a random but fixed classical Hamming weight $k$ state 
followed by an initial mixer, cf. Figure~\ref{fig:qaoa2}). 
Qualitatively similar results have been found for the \textsc{Max-k-colorable Subgraph} problem, 
in which the color of each vertex is encoded in a one-hot-encoding on which an $XY$-mixer acts~\cite{nasa}.

While $\xyclique$ outperforms $\xyring$, an analytical solution of the $XY$-model on a complete graph is not known, hence an exact implementation of $e^{-i\beta \xyclique}$ seems out of reach.
An implementation which is equivalent in the Hamming weight 1 and Hamming weight $n-1$ subspaces is known, but as such only applicable to \textsc{Max-k-colorable Subgraph} and not to  \textsc{Max-k-VertexCover}.
On the other hand, taking a fermionic view at the $\xyring$ Hamiltonian one gets quadratic fermionic couplings which can be diagonalized using a Fermionic Fast Fourier Transform~\cite{nasa}, which can be implemented in quadratic size and linear depth on Linear Nearest Neighbor architectures using a Givens-rotation network~\cite{givens-FFFT}.

Our Grover mixer $U_M(\beta) = e^{-i\beta H_{GM}} = e^{-i\beta \ket{F}\bra{F}}$ can close this gap, combining better connectivity between feasible states in the mixing Hamiltonian as well as exact implementability in linear depth on LNN architectures. 
For a comparison of $H_{GM}$ with $\xyring$ and $\xyclique$, we can consider the actions of all three mixing unitaries \emph{in the feasible suspace} by looking at the three mixing Hamiltonians restricted to the feasible subspace $F$. These are (up to constant factors)
\begin{align*}
    \xyring|_F  & \cong \sum_{{x,y \in F,\ BSD(x,y)=1}} \ket{x}\bra{y}  \\  
    \xyclique|_F  & \cong\, \sum_{{x,y \in F,\ HD(x,y)=2}}\, \ket{x}\bra{y}  \\  
    H_{GM}  & \cong\hspace*{25pt}\sum_{{x,y \in F}}\hspace*{25pt} \ket{x}\bra{y} 
\end{align*}
where $BSD(x,y)$ is the Bubble Sort distance and $HD(x,y)$ is the Hamming distance  
between strings $x$ and $y$. As any two binary strings of the same Hamming weight 
with Bubble Sort distance 1 must have Hamming distance 2,
$\xyclique|_F$ sums over a superset of $\xyring|_F$ and $H_{GM}$ over a superset of $\xyclique|_F$, with $H_{GM}$ having full connectivity between feasible states while still
having an $O(n)$ depth and $O(n\cdot k) \subseteq O(n^2)$ size circuit implementation for its mixer, combining the advantages of the other two mixing Hamiltonians.

\subsubsection{Generalization}
The state preparation and mixer of GM-QAOA for \textsc{Max-k-VertexCover} are immediately 
applicable to other solution-size equality constraint optimization problems such as  \textsc{Densest-k-Subgraph} or \textsc{Max-k-SetCover}.
However, consideration has to be given to the complexity of the cost Hamiltonians -- 
in the former case, the terms stay 2-local (as the OR of \textsc{Max-k-VertexCover} in Equation~\eqref{eq:kvc-or} is replaced by an AND), 
but for \textsc{Max-k-SetCover} the locality as well as the number of terms in $H_C$ 
might be large (albeit still polynomial in the problem size).

%%%% ---
%%%% TSP
%%%% ---
\subsection{Traveling Salesperson Problem}
\label{sec:tsp}

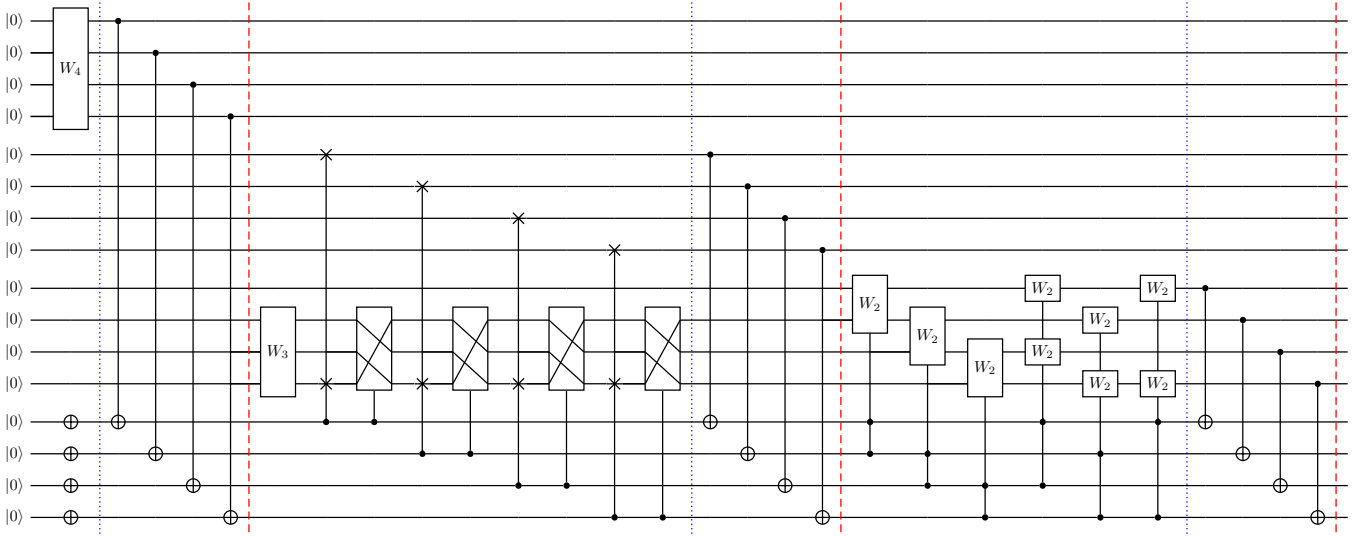
\begin{figure*}[t!]
	\centering
	\begin{adjustbox}{width=\textwidth}
	%% Sn Initialization
	%% -----------------
		\def\rowspace{20pt}
		\begin{quantikz}[row sep={\rowspace,between origins},execute at end picture={
				\foreach \col in {9,11,13,15}
				{
					\draw[thick] ($(\tikzcdmatrixname-10-\col.west)-(0pt,0pt)$) to ($(\tikzcdmatrixname-10-\col.east)-(0pt,\rowspace)$);
					\draw[thick] ($(\tikzcdmatrixname-10-\col.west)-(0pt,\rowspace)$) to ($(\tikzcdmatrixname-10-\col.east)-(0pt,2*\rowspace)$);
					\draw[thick] ($(\tikzcdmatrixname-10-\col.west)-(0pt,2*\rowspace)$) to ($(\tikzcdmatrixname-10-\col.east)-(0pt,0pt)$);
				}
			}]
		\lstick{\ket{0}}	& \gate[4]{W_4}	& \ctrl{12}	& \qw		& \qw		& \qw\slice{}	& \qw		& \qw		& \qw		& \qw		& \qw		& \qw		& \qw		& \qw & \qw\slice[style={blue,dotted}]{}& \qw	& \qw		& \qw		& \qw\slice{}	& \qw		& \qw		& \qw		& \qw		& \qw & \qw\slice[style={blue,dotted}]{}& \qw	& \qw		& \qw		& \qw\slice{}		& \qw	\\ 
		\lstick{\ket{0}}& \qw\slice[style={blue,dotted}]{}& \qw	& \ctrl{12}	& \qw		& \qw		& \qw		& \qw		& \qw		& \qw		& \qw		& \qw		& \qw		& \qw		& \qw		& \qw		& \qw		& \qw		& \qw		& \qw		& \qw		& \qw		& \qw		& \qw		& \qw		& \qw		& \qw		& \qw		& \qw		& \qw	\\ 
		\lstick{\ket{0}}	& \qw		& \qw		& \qw		& \ctrl{12}	& \qw		& \qw		& \qw		& \qw		& \qw		& \qw		& \qw		& \qw		& \qw		& \qw		& \qw		& \qw		& \qw		& \qw		& \qw		& \qw		& \qw		& \qw		& \qw		& \qw		& \qw		& \qw		& \qw		& \qw		& \qw	\\ 
		\lstick{\ket{0}}	& \qw		& \qw		& \qw		& \qw		& \ctrl{12}	& \qw		& \qw		& \qw		& \qw		& \qw		& \qw		& \qw		& \qw		& \qw		& \qw		& \qw		& \qw		& \qw		& \qw		& \qw		& \qw		& \qw		& \qw		& \qw		& \qw		& \qw		& \qw		& \qw		& \qw	\\[4pt] 
		\lstick{\ket{0}}	& \qw		& \qw		& \qw		& \qw		& \qw		& \qw		& \swap{7}	& \qw		& \qw		& \qw		& \qw		& \qw		& \qw		& \qw		& \ctrl{8}	& \qw		& \qw		& \qw		& \qw		& \qw		& \qw		& \qw		& \qw		& \qw		& \qw		& \qw		& \qw		& \qw		& \qw	\\ 
		\lstick{\ket{0}}	& \qw		& \qw		& \qw		& \qw		& \qw		& \qw		& \qw		& \qw		& \swap{6}	& \qw		& \qw		& \qw		& \qw		& \qw		& \qw		& \ctrl{8}	& \qw		& \qw		& \qw		& \qw		& \qw		& \qw		& \qw		& \qw		& \qw		& \qw		& \qw		& \qw		& \qw	\\ 
		\lstick{\ket{0}}	& \qw		& \qw		& \qw		& \qw		& \qw		& \qw		& \qw		& \qw		& \qw		& \qw		& \swap{5}	& \qw		& \qw		& \qw		& \qw		& \qw		& \ctrl{8}	& \qw		& \qw		& \qw		& \qw		& \qw		& \qw		& \qw		& \qw		& \qw		& \qw		& \qw		& \qw	\\ 
		\lstick{\ket{0}}	& \qw		& \qw		& \qw		& \qw		& \qw		& \qw		& \qw		& \qw		& \qw		& \qw		& \qw		& \qw		& \swap{4}	& \qw		& \qw		& \qw		& \qw		& \ctrl{8}	& \qw		& \qw		& \qw		& \qw		& \qw		& \qw		& \qw		& \qw		& \qw		& \qw		& \qw	\\[4pt]
		\lstick{\ket{0}}	& \qw		& \qw		& \qw		& \qw		& \qw		& \qw		& \qw		& \qw		& \qw		& \qw		& \qw		& \qw		& \qw		& \qw		& \qw		& \qw		& \qw		& \qw		& \gate[2]{W_2}	& \qw		& \qw		& \gate{W_2}	& \qw		& \gate{W_2}	& \ctrl{4}	& \qw		& \qw		& \qw		& \qw	\\ 
		\lstick{\ket{0}}	& \qw		& \qw		& \qw		& \qw		& \qw		& \gate[3]{W_3}	& \qw		& \gate[3]{}W_3	& \qw		& \gate[3]{}W_3	& \qw		& \gate[3]{}W_3	& \qw		& \gate[3]{}W_3	& \qw		& \qw		& \qw		& \qw		& \qw		& \gate[2]{W_2}	& \qw		& \qw		& \gate{W_2}	& \qw		& \qw		& \ctrl{4}	& \qw		& \qw		& \qw	\\ 
		\lstick{\ket{0}}	& \qw		& \qw		& \qw		& \qw		& \qw		& \qw		& \qw		& \qw		& \qw		& \qw		& \qw		& \qw		& \qw		& \qw		& \qw		& \qw		& \qw		& \qw		& \qw		& \qw		& \gate[2]{W_2}	& \gate{W_2}	& \qw		& \qw		& \qw		& \qw		& \ctrl{4}	& \qw		& \qw	\\ 
		\lstick{\ket{0}}	& \qw		& \qw		& \qw		& \qw		& \qw		& \qw		& \swap{1}	& \qw		& \swap{2}	& \qw		& \swap{3}	& \qw		& \swap{4}	& \qw		& \qw		& \qw		& \qw		& \qw		& \qw		& \qw		& \qw		& \qw		& \gate{W_2}	& \gate{W_2}	& \qw		& \qw		& \qw		& \ctrl{4}	& \qw	\\[4pt]
		\lstick{\ket{0}}	& \targ{}	& \targ{}	& \qw		& \qw		& \qw		& \qw		& \ctrl{}	& \ctrl{-1}	& \qw		& \qw		& \qw		& \qw		& \qw		& \qw		& \targ{}	& \qw		& \qw		& \qw		& \ctrl{}	& \qw		& \qw		& \ctrl{}	& \qw		& \ctrl{}	& \targ{}	& \qw		& \qw		& \qw		& \qw	\\ 
		\lstick{\ket{0}}	& \targ{}	& \qw		& \targ{}	& \qw		& \qw		& \qw		& \qw		& \qw		& \ctrl{}	& \ctrl{-2}	& \qw		& \qw		& \qw		& \qw		& \qw		& \targ{}	& \qw		& \qw		& \ctrl{-4}	& \ctrl{}	& \qw		& \qw		& \ctrl{}	& \qw		& \qw		& \targ{}	& \qw		& \qw		& \qw	\\ 
		\lstick{\ket{0}}	& \targ{}	& \qw		& \qw		& \targ{}	& \qw		& \qw		& \qw		& \qw		& \qw		& \qw		& \ctrl{}	& \ctrl{-3}	& \qw		& \qw		& \qw		& \qw		& \targ{}	& \qw		& \qw		& \ctrl{-4}	& \ctrl{}	& \ctrl{-6}	& \qw		& \qw		& \qw		& \qw		& \targ{}	& \qw		& \qw	\\ 
		\lstick{\ket{0}}	& \targ{}	& \qw		& \qw		& \qw		& \targ{}	& \qw		& \qw		& \qw		& \qw		& \qw		& \qw		& \qw		& \ctrl{}	& \ctrl{-4}	& \qw		& \qw		& \qw		& \targ{}	& \qw		& \qw		& \ctrl{-4}	& \qw		& \ctrl{-6}	& \ctrl{-7}	& \qw		& \qw		& \qw		& \targ{}	& \qw	 
	\end{quantikz}
	\end{adjustbox}
	\caption{State Preparation $U_S$: 
		(left) Initialization of the first row in a $W_4$-state and $\ket{1111}$ in the bitmask in the last row, followed by a bitmask update.
		(middle) Swapping a $W_3$-state from the third row into the second row, controlled on the bitmask, followed by a bitmask update.
		(right) Direct $W_2$-state generation in the third row, controlled on the bitmask, followed by a bitmask update. 
		At the end, the row constraint of the fourth row is also satisfied. 
	}
	\label{fig:permutation-superposition}
\end{figure*}

In permutation-based $NP$ optimization problems the task is to find a minimum/maximum objective value over all permutations of the input.
For example, in the Traveling Salesperson Problem, 
we are given $n$ cities with a distance function $d\colon [n]\times[n] \rightarrow \mathbb{R}_{\geq 0}$ between each pair of cities.
The goal is to find a tour through all cities that minimizes (approximately or exactly) the total travel distance along the tour.

%% TSP encoding
\subsubsection{Problem encoding}
Such a tour is given as a permutation $\tau\colon [n] \rightarrow [n]$ of the input cities, 
and we get a minimization problem over all permutations in the symmetric group $S_n$:
\[ \min_{\tau \in S_n}{ \sum_{k=1}^{n} d(\tau^{-1}(k), \tau^{-1}(k+1)) }.	\]

We can represent a permutation $\tau$ by a permutation matrix, 
i.e. a matrix which has exactly one ``1'' entry per column and per row (and ``0''-entries otherwise). 
Here, a ``1'' entry in row $r$ and column $c$ denotes city $c$ appearing at the $r$-th position of the tour. 
Equivalently, we may consider $\tau$ as a bitstring $x = x_{n^2-1}\dots x_1x_0$ (with 0-based indexing), 
where the $r$-th group of $n$ bits denotes a one-hot-encoding of $\tau^{-1}(r)$, 
meaning that city $\tau^{-1}(r)$ is mapped to the $r$-th position in the tour. 
Hence we can express the symmetric group as:
\begin{align*}
	S_n = \{ x \mid 
		&\ x = x_{n^2-1}\ldots x_1x_0 \text{ such that}	\\
		&\ \forall\ 0 \leq c < n\colon\ \sum_{\mathclap{j \equiv c \bmod n}} \ x_j = 1	& \text{(col. constraints)}\;\; \\ 
		&\ \forall\ 0 \leq r < n\colon\ \sum_{\mathclap{\lfloor i/n \rfloor = r}} \ x_i = 1	& \text{(row constraints)} \} 
\end{align*}
Using this definition and computational basis states $x\in S_n$ on $n^2$ qubits we get a cost Hamiltonian
\begin{align}
	H_C = \sum_{i=0}^{n-1} \sum_{u=0}^{n-1} \sum_{v=0}^{n-1} d(u,v) \sigma_{i\cdot n + u}^z  \sigma_{(i+1)\cdot n + v}^z,	
	\label{eq:hc-tsp}
\end{align}
adding a distance of $d(u,v)$ once if and only if city $v$ is visited directly after city $u$. 
As $H_C$ simply consists of mutually commuting 2-local terms which are diagonal in the computational basis, 
the implementation of the corresponding Phase separator unitary $U_P(\gamma) = e^{-i\gamma H_C}$ is straightforward.

%% TSP stateprep
\subsubsection{State preparation}
We now give an efficient circuit for a state preparation unitary $U_S$ creating an equal superposition of
all $n!$ permutations, $U_S \ket{0}^{\otimes n^2} = \ket{S_n} = \tfrac{1}{\sqrt{n!}} \sum_{x \in S_n} \ket{x}$.
The existence of a polynomial-size circuit to prepare $\ket{S_n}$ (at least approximately) follows from results on efficiently specifiable polynomials including the permanent $\mathrm{perm}(x_0,\ldots,x_{n^2-1}) = \sum_{\tau \in S_n} \prod_{i=1}^n x_{i,\tau(i)}$~\cite[Theorem 28]{fefferman2016}, or by considering the rows of $S_n$ as qudits and using the (inverse) Schur transform for symmetric qudit states~\cite{Bacon2006}.
Here, we give an explicit construction of an $O(n^2)$ depth circuit with $O(n^3)$ gates to prepare $\ket{S_n}$ exactly.

We construct the superposition inductively row by row, such that after processing $k$ rows
our states satisfies \emph{all} column constraints as well as the \emph{first $k$} row constraints.
In order for the column constraints to match up, we use the $n$ qubits of the last row as a \emph{bitmask}
of the yet unoccupied columns, i.e. we get an equal superposition of all $\tfrac{n!}{k!}$ terms in the set
\begin{align*}
	\{ x \mid 
	&\ x = x_{n^2-1}\ldots x_1x_0 \text{ such that}	\\
	&\ \forall\ 0 \leq c < n\colon\ \sum_{\mathclap{j \equiv c \bmod n}} \ x_j = 1	& \text{(col. constraints)}\;\; \\ 
	&\ \forall\ 0 \leq r < \textcolor{red}{k}\colon\ \sum_{\mathclap{\lfloor i/n \rfloor = r}} \ x_i = 1 & \text{(row constraints)} \;\; \\
	&\ \phantom{\forall\ } r = n-1\colon\ \sum_{\mathclap{\lfloor i/n \rfloor = r}} \ x_i = n-k & \text{(bitmask)} \} 
\end{align*}
(Note that for $k=n-1$ we already get the set $S_n$.)
Inductively, the $k$-th row is processed in two steps, illustrated for all rows in Figure~\ref{fig:permutation-superposition}:
\begin{itemize}
	\item	First, the row constraint for the new row is satisfied.
	\item	Next, the bitmask in the last row is updated.
\end{itemize}

In more detail, before adding the row of index $k-1$, the bitmask row designates (in superposition) 
the $n-k+1$ columns of the permutation matrix in which a ``1'' may still be entered.
For the very first row (Figure~\ref{fig:permutation-superposition} left), the bitmask will simply be $\ket{1}^{\otimes n}$, and we easily create a $W_n$-state on the first $n$ qubits.
For the second last row of index $n-2$ (Figure~\ref{fig:permutation-superposition} right), the bitmask is a superposition of all $n$-qubit states of Hamming weight 2, hence 
for each of the $\binom{n}{2}$ possibilities we use a $2$-controlled $W_2$-state preparation on the respective columns.  

For any row of index $k-1$ in-between, such a directly controlled $W_{n-k+1}$-state preparations would result in $\binom{n}{n-k+1}$ many controlled operations.
Instead, by using the following row $k$ as an ancilla row, we can significantly reduce the number of operations.  
To this end, we generate a $W_{n-k+1}$-state on the last $n-k+1$ qubits of row $k$, and swap these qubits into the correct places in row $k-1$
(Figure~\ref{fig:permutation-superposition} middle). This is done with 
\begin{itemize}
	\item	$n$ controlled swaps (for $0\leq i < n$ swapping qubits $k n-1$ and $(k-1)n+i$, controlled by qubit $(n-1)n+i$),
	\item	each followed by a controlled cyclic rotation of the last $n-k+1$ qubits of row $k$ (controlled by the same qubit $(n-1)n+i$). 
\end{itemize}
The controls come from the bitmask register, hence exactly $n-k+1$ of these operations will be executed, 
at which point row $k$ is again in the state $\ket{0}^{\otimes n}$.
For each row, after adding the new row constraint, we update the bitmask with $n$ CNOT gates, 
for $0\leq i < n$ resetting qubit $(n-1)n + i$ controlled on qubit $(k-1)n + i$.

Circuit implementations in Quirk can be found 
\href{https://algassert.com/quirk#circuit=%7B%22cols%22%3A%5B%5B%22~94a8%22%2C1%2C1%2C1%2C1%2C1%2C%22X%22%2C%22X%22%2C%22X%22%5D%2C%5B%22Chance3%22%2C1%2C1%2C%22Chance3%22%2C1%2C1%2C%22Chance3%22%5D%2C%5B%22%E2%80%A2%22%2C1%2C1%2C1%2C1%2C1%2C%22X%22%5D%2C%5B1%2C%22%E2%80%A2%22%2C1%2C1%2C1%2C1%2C1%2C%22X%22%5D%2C%5B1%2C1%2C%22%E2%80%A2%22%2C1%2C1%2C1%2C1%2C1%2C%22X%22%5D%2C%5B%22Chance3%22%2C1%2C1%2C%22Chance3%22%2C1%2C1%2C%22Chance3%22%5D%2C%5B1%2C1%2C1%2C%22~mv19%22%2C1%2C1%2C%22%E2%80%A2%22%2C%22%E2%80%A2%22%5D%2C%5B1%2C1%2C1%2C1%2C%22~mv19%22%2C1%2C1%2C%22%E2%80%A2%22%2C%22%E2%80%A2%22%5D%2C%5B1%2C1%2C1%2C%22~bvbm%22%2C1%2C1%2C%22%E2%80%A2%22%2C1%2C%22%E2%80%A2%22%5D%2C%5B%22Chance3%22%2C1%2C1%2C%22Chance3%22%2C1%2C1%2C%22Chance3%22%5D%2C%5B1%2C1%2C1%2C%22%E2%80%A2%22%2C1%2C1%2C%22X%22%5D%2C%5B1%2C1%2C1%2C1%2C%22%E2%80%A2%22%2C1%2C1%2C%22X%22%5D%2C%5B1%2C1%2C1%2C1%2C1%2C%22%E2%80%A2%22%2C1%2C1%2C%22X%22%5D%2C%5B%22Chance3%22%2C1%2C1%2C%22Chance3%22%2C1%2C1%2C%22Chance3%22%5D%5D%2C%22gates%22%3A%5B%7B%22id%22%3A%22~590m%22%2C%22name%22%3A%22%E2%88%9A1%2F3%22%2C%22matrix%22%3A%22%7B%7B%E2%88%9A%E2%85%93%2C-%E2%88%9A%E2%85%94%7D%2C%7B%E2%88%9A%E2%85%94%2C%E2%88%9A%E2%85%93%7D%7D%22%7D%2C%7B%22id%22%3A%22~94a8%22%2C%22name%22%3A%22W3%22%2C%22circuit%22%3A%7B%22cols%22%3A%5B%5B%22X%22%5D%2C%5B%22%E2%80%A2%22%2C%22~590m%22%5D%2C%5B%22X%22%2C%22%E2%80%A2%22%2C%22H%22%5D%2C%5B1%2C%22X%22%2C%22%E2%80%A2%22%5D%5D%7D%7D%2C%7B%22id%22%3A%22~mv19%22%2C%22name%22%3A%22W2%22%2C%22circuit%22%3A%7B%22cols%22%3A%5B%5B%22X%22%5D%2C%5B%22%E2%80%A2%22%2C%22H%22%5D%2C%5B%22X%22%2C%22%E2%80%A2%22%5D%5D%7D%7D%2C%7B%22id%22%3A%22~bvbm%22%2C%22name%22%3A%22W2'%22%2C%22circuit%22%3A%7B%22cols%22%3A%5B%5B%22X%22%5D%2C%5B%22%E2%80%A2%22%2C1%2C%22H%22%5D%2C%5B%22X%22%2C1%2C%22%E2%80%A2%22%5D%5D%7D%7D%5D%7D
}{here (for $S_3$)}, and
\href{https://algassert.com/quirk#circuit=%7B%22cols%22%3A%5B%5B%22~bo2r%22%2C1%2C1%2C1%2C1%2C1%2C1%2C1%2C1%2C1%2C1%2C1%2C%22X%22%2C%22X%22%2C%22X%22%2C%22X%22%5D%2C%5B%22Chance4%22%2C1%2C1%2C1%2C%22Chance4%22%2C1%2C1%2C1%2C%22Chance4%22%2C1%2C1%2C1%2C%22Chance4%22%5D%2C%5B%22%E2%80%A2%22%2C1%2C1%2C1%2C1%2C1%2C1%2C1%2C1%2C1%2C1%2C1%2C%22X%22%5D%2C%5B1%2C%22%E2%80%A2%22%2C1%2C1%2C1%2C1%2C1%2C1%2C1%2C1%2C1%2C1%2C1%2C%22X%22%5D%2C%5B1%2C1%2C%22%E2%80%A2%22%2C1%2C1%2C1%2C1%2C1%2C1%2C1%2C1%2C1%2C1%2C1%2C%22X%22%5D%2C%5B1%2C1%2C1%2C%22%E2%80%A2%22%2C1%2C1%2C1%2C1%2C1%2C1%2C1%2C1%2C1%2C1%2C1%2C%22X%22%5D%2C%5B%22Chance4%22%2C1%2C1%2C1%2C%22Chance4%22%2C1%2C1%2C1%2C%22Chance4%22%2C1%2C1%2C1%2C%22Chance4%22%5D%2C%5B1%2C1%2C1%2C1%2C1%2C1%2C1%2C1%2C1%2C%22~94a8%22%5D%2C%5B1%2C1%2C1%2C1%2C1%2C1%2C1%2C1%2C1%2C%22Chance3%22%5D%2C%5B1%2C1%2C1%2C1%2C%22Swap%22%2C1%2C1%2C1%2C1%2C1%2C1%2C%22Swap%22%2C%22%E2%80%A2%22%5D%2C%5B1%2C1%2C1%2C1%2C1%2C1%2C1%2C1%2C1%2C%22%3C%3C3%22%2C1%2C1%2C%22%E2%80%A2%22%5D%2C%5B1%2C1%2C1%2C1%2C1%2C%22Swap%22%2C1%2C1%2C1%2C1%2C1%2C%22Swap%22%2C1%2C%22%E2%80%A2%22%5D%2C%5B1%2C1%2C1%2C1%2C1%2C1%2C1%2C1%2C1%2C%22%3C%3C3%22%2C1%2C1%2C1%2C%22%E2%80%A2%22%5D%2C%5B1%2C1%2C1%2C1%2C1%2C1%2C%22Swap%22%2C1%2C1%2C1%2C1%2C%22Swap%22%2C1%2C1%2C%22%E2%80%A2%22%5D%2C%5B1%2C1%2C1%2C1%2C1%2C1%2C1%2C1%2C1%2C%22%3C%3C3%22%2C1%2C1%2C1%2C1%2C%22%E2%80%A2%22%5D%2C%5B1%2C1%2C1%2C1%2C1%2C1%2C1%2C%22Swap%22%2C1%2C1%2C1%2C%22Swap%22%2C1%2C1%2C1%2C%22%E2%80%A2%22%5D%2C%5B1%2C1%2C1%2C1%2C1%2C1%2C1%2C1%2C1%2C%22%3C%3C3%22%2C1%2C1%2C1%2C1%2C1%2C%22%E2%80%A2%22%5D%2C%5B%22Chance4%22%2C1%2C1%2C1%2C%22Chance4%22%2C1%2C1%2C1%2C%22Chance4%22%2C1%2C1%2C1%2C%22Chance4%22%5D%2C%5B1%2C1%2C1%2C1%2C%22%E2%80%A2%22%2C1%2C1%2C1%2C1%2C1%2C1%2C1%2C%22X%22%5D%2C%5B1%2C1%2C1%2C1%2C1%2C%22%E2%80%A2%22%2C1%2C1%2C1%2C1%2C1%2C1%2C1%2C%22X%22%5D%2C%5B1%2C1%2C1%2C1%2C1%2C1%2C%22%E2%80%A2%22%2C1%2C1%2C1%2C1%2C1%2C1%2C1%2C%22X%22%5D%2C%5B1%2C1%2C1%2C1%2C1%2C1%2C1%2C%22%E2%80%A2%22%2C1%2C1%2C1%2C1%2C1%2C1%2C1%2C%22X%22%5D%2C%5B%22Chance4%22%2C1%2C1%2C1%2C%22Chance4%22%2C1%2C1%2C1%2C%22Chance4%22%2C1%2C1%2C1%2C%22Chance4%22%5D%2C%5B1%2C1%2C1%2C1%2C1%2C1%2C1%2C1%2C%22~mv19%22%2C1%2C1%2C1%2C%22%E2%80%A2%22%2C%22%E2%80%A2%22%5D%2C%5B1%2C1%2C1%2C1%2C1%2C1%2C1%2C1%2C1%2C%22~mv19%22%2C1%2C1%2C1%2C%22%E2%80%A2%22%2C%22%E2%80%A2%22%5D%2C%5B1%2C1%2C1%2C1%2C1%2C1%2C1%2C1%2C1%2C1%2C%22~mv19%22%2C1%2C1%2C1%2C%22%E2%80%A2%22%2C%22%E2%80%A2%22%5D%2C%5B1%2C1%2C1%2C1%2C1%2C1%2C1%2C1%2C%22~bvbm%22%2C1%2C1%2C1%2C%22%E2%80%A2%22%2C1%2C%22%E2%80%A2%22%5D%2C%5B1%2C1%2C1%2C1%2C1%2C1%2C1%2C1%2C1%2C%22~bvbm%22%2C1%2C1%2C1%2C%22%E2%80%A2%22%2C1%2C%22%E2%80%A2%22%5D%2C%5B1%2C1%2C1%2C1%2C1%2C1%2C1%2C1%2C%22~plrr%22%2C1%2C1%2C1%2C%22%E2%80%A2%22%2C1%2C1%2C%22%E2%80%A2%22%5D%2C%5B%22Chance4%22%2C1%2C1%2C1%2C%22Chance4%22%2C1%2C1%2C1%2C%22Chance4%22%2C1%2C1%2C1%2C%22Chance4%22%5D%2C%5B1%2C1%2C1%2C1%2C1%2C1%2C1%2C1%2C%22%E2%80%A2%22%2C1%2C1%2C1%2C%22X%22%5D%2C%5B1%2C1%2C1%2C1%2C1%2C1%2C1%2C1%2C1%2C%22%E2%80%A2%22%2C1%2C1%2C1%2C%22X%22%5D%2C%5B1%2C1%2C1%2C1%2C1%2C1%2C1%2C1%2C1%2C1%2C%22%E2%80%A2%22%2C1%2C1%2C1%2C%22X%22%5D%2C%5B1%2C1%2C1%2C1%2C1%2C1%2C1%2C1%2C1%2C1%2C1%2C%22%E2%80%A2%22%2C1%2C1%2C1%2C%22X%22%5D%2C%5B%22Chance4%22%2C1%2C1%2C1%2C%22Chance4%22%2C1%2C1%2C1%2C%22Chance4%22%2C1%2C1%2C1%2C%22Chance4%22%5D%5D%2C%22gates%22%3A%5B%7B%22id%22%3A%22~590m%22%2C%22name%22%3A%22%E2%88%9A1%2F3%22%2C%22matrix%22%3A%22%7B%7B%E2%88%9A%E2%85%93%2C-%E2%88%9A%E2%85%94%7D%2C%7B%E2%88%9A%E2%85%94%2C%E2%88%9A%E2%85%93%7D%7D%22%7D%2C%7B%22id%22%3A%22~bo2r%22%2C%22name%22%3A%22W4%22%2C%22circuit%22%3A%7B%22cols%22%3A%5B%5B%22X%22%5D%2C%5B%22%E2%80%A2%22%2C%22H%22%5D%2C%5B%22X%22%2C%22%E2%80%A2%22%2C%22H%22%5D%2C%5B1%2C%22X%22%2C%22%E2%80%A2%22%5D%2C%5B%22%E2%80%A2%22%2C1%2C1%2C%22H%22%5D%2C%5B%22X%22%2C1%2C1%2C%22%E2%80%A2%22%5D%5D%7D%7D%2C%7B%22id%22%3A%22~94a8%22%2C%22name%22%3A%22W3%22%2C%22circuit%22%3A%7B%22cols%22%3A%5B%5B%22X%22%5D%2C%5B%22%E2%80%A2%22%2C%22~590m%22%5D%2C%5B%22X%22%2C%22%E2%80%A2%22%2C%22H%22%5D%2C%5B1%2C%22X%22%2C%22%E2%80%A2%22%5D%5D%7D%7D%2C%7B%22id%22%3A%22~mv19%22%2C%22name%22%3A%22W2%22%2C%22circuit%22%3A%7B%22cols%22%3A%5B%5B%22X%22%5D%2C%5B%22%E2%80%A2%22%2C%22H%22%5D%2C%5B%22X%22%2C%22%E2%80%A2%22%5D%5D%7D%7D%2C%7B%22id%22%3A%22~bvbm%22%2C%22name%22%3A%22W2'%22%2C%22circuit%22%3A%7B%22cols%22%3A%5B%5B%22X%22%5D%2C%5B%22%E2%80%A2%22%2C1%2C%22H%22%5D%2C%5B%22X%22%2C1%2C%22%E2%80%A2%22%5D%5D%7D%7D%2C%7B%22id%22%3A%22~plrr%22%2C%22name%22%3A%22W2''%22%2C%22circuit%22%3A%7B%22cols%22%3A%5B%5B%22X%22%5D%2C%5B%22%E2%80%A2%22%2C1%2C1%2C%22H%22%5D%2C%5B%22X%22%2C1%2C1%2C%22%E2%80%A2%22%5D%5D%7D%7D%5D%7D
}{here (for $S_4$)}.
Finally, we note that there is a redundancy in the superposition of cyclically shifted permutations -- they encode the same tour.
One may circumvent this by choosing a fixed position in the first row, instead of generating a $W_n$ state. We conclude:

\begin{theorem}
	A state preparation unitary $U_S$ for an equal superposition of all permutations in $S_n$, 
	$U_S\colon \ket{0}^{\otimes n^2} \mapsto |S_n|^{-1/2} \sum_{x \in S_n} \ket{x}$
	can be implemented with $O(n^3)$ gates in depth $O(n^2)$, even on Linear Nearest Neighbor architectures.  
	\label{thm:tsp-state-preparation}
\end{theorem}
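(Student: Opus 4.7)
The plan is to prove correctness and the resource bounds together by induction on the row index $k$, and then address the Linear Nearest Neighbor claim. The invariant I would maintain is that after processing the first $k$ rows, the joint state on the $n^2$ qubits is the equal superposition over all computational basis states satisfying all $n$ column constraints, the first $k$ row constraints, and with the bitmask row equal to the bitwise complement of the bitwise OR of the first $k$ rows. The base case $k=1$ follows by preparing a $W_n$-state on the first row via Theorem~\ref{thm:dicke}, initializing the bitmask to $\ket{1}^{\otimes n}$ with $X$-gates, and applying $n$ CNOTs from the first row into the bitmask.

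For the inductive step, I would analyze the row-$k$ block branch by branch on the bitmask register. On each computational basis branch with bitmask support $T \subseteq \{0,\ldots,n-1\}$ of size $n-k+1$, the subroutine first prepares a $W_{n-k+1}$-state on the last $n-k+1$ qubits of the auxiliary row $k$ (this step acts uniformly across branches, since those qubits are in $\ket{0}$ on every branch), and then iterates the controlled-swap and controlled cyclic-rotation pair for $i=0,\ldots,n-1$. Only the $n-k+1$ iterations with $i \in T$ actually fire, and a per-branch calculation shows that each firing coherently transfers the amplitude currently sitting on the last qubit of row $k$ to column $i$ of row $k-1$, while the ensuing cyclic rotation moves the next amplitude into the last qubit in time for the next firing. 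At the end of the block, row $k$ has returned to $\ket{0}^{\otimes n}$ and row $k-1$ holds a $W$-state with support exactly $T$; symmetry across branches then yields the claimed superposition, and the subsequent $n$ CNOTs from row $k-1$ into the bitmask reestablish the invariant.

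For resource accounting, each row contributes $O(n)$ gates in $O(n)$ depth for the $W$-state preparation (Theorem~\ref{thm:dicke}), $O(n^2)$ gates in $O(n)$ depth for the $n$ controlled swap/cyclic-rotation pairs (each cyclic rotation is an $O(n)$-gate primitive and the pairs are executed sequentially along the bitmask register), and $O(n)$ gates in $O(1)$ depth for the bitmask update; summing over the $n-1$ nontrivial rows gives $O(n^3)$ total gates in $O(n^2)$ depth. For the Linear Nearest Neighbor claim, I would lay out the $n^2$ qubits so that at the start of each row block a constant-overhead global swap network brings row $k-1$, row $k$, and the bitmask row into adjacent positions on the line; Theorem~\ref{thm:dicke} guarantees LNN-compatibility within a row, and the inter-row controlled swaps and rotations then act on adjacent wires, so the per-row bounds survive with only constant overhead.

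The main obstacle is the per-branch correctness of the swap/rotation pipeline: one must verify that the sequential firing of controlled cyclic rotations on the auxiliary row $k$ yields exactly equal amplitudes across all placements $i \in T$ and returns row $k$ to $\ket{0}^{\otimes n}$ at the end of each branch. Because the cyclic rotations reindex the remaining excitations in a $T$-dependent way, keeping clean bookkeeping of where each amplitude of the initial $W_{n-k+1}$-state ends up after each firing, and checking that the rotation schedule consumes them in the order that yields the uniform $W_T$-state on row $k-1$, is the step that requires the most care.
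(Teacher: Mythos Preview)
Your correctness argument and gate-count accounting are sound and track the construction described in the paper. The gap is in your depth analysis. You assert that the $n$ controlled swap/cyclic-rotation pairs contribute $O(n)$ depth per row, while in the very same sentence you note that each cyclic rotation is an $O(n)$-gate primitive and that ``the pairs are executed sequentially along the bitmask register.'' A cyclic rotation on $\Theta(n)$ qubits, implemented as a stair of nearest-neighbor swaps, has depth $\Theta(n)$; executing $n$ of them sequentially yields $\Theta(n^2)$ depth per row and only $O(n^3)$ total depth, missing the theorem's bound by a factor of $n$.

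The ingredient you are missing, and which the paper's proof supplies explicitly, is that these stair-shaped controlled rotations can be \emph{pipelined}: as soon as the first swap of the $i$-th rotation has acted on the last two qubits of the auxiliary row, those qubits are released and the controlled swap of step $i{+}1$ (followed by the first swap of rotation $i{+}1$) can start one layer later. Pushing $n$ stairs of length $O(n)$ together in this overlapping fashion keeps the per-row depth at $O(n)$. Your LNN strategy of keeping the bitmask adjacent to the active row is exactly what the paper does; you only need to add the observation that the pipelining survives on a line because each stair already acts on a contiguous block of qubits.
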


\begin{proof}
	We prove that the processing of each row can be implemented with $O(n^2)$ gates in depth $O(n)$:
	For the second to last row, we have $O(n^2)$ 4-qubit gates, each of which can be implemented in constant size 
	in parallel with $\approx \tfrac{n}{2}$ other of these gates for a row depth of $O(n)$.
	
	For all other rows, the process consist of a $W$-state generation and $n$ controlled swaps and cyclic rotations.
	$W$-states can be implemented in $O(n)$ size and depth, and cyclic rotations as well
	(via a stair of swaps between neighboring qubits). Hence these larger stair-shaped gates can be pushed together 
	to overlap in depth, keeping a linear $O(n)$ depth overall, even on LNN architectures.

	Finally, to have the same asymptotics overall for LNN architectures, one can keep the bitmask register not in 
	the bottom/last row, but always adjacent to the currently processed row. Shifting the bitmask register to lower 
	rows over time needs an additional $O(n^3)$ gates in $O(n^2)$ depth in total, keeping the same asymptotics.
\end{proof}

\subsubsection{Improvement}
In the following we show that our Grover Mixer approach is competive in depth with existing mixer designs for permutations,
while lowering the number of total gates and not suffering from side effects of Trotterization. 

Previous approaches to design mixers for permutations have been based on Hamiltonians $H_{M,\{i,j\},\{u,v\}}$ swapping 
cities $u,v$ in tour positions $i,j$~\cite{Hadfield2019}. These can be combined into value-independent ordering swap 
mixers, most prominently for transpositions (adjacent rows): 
\[	U_{M,i}(\beta) = e^{-i\beta H_{M,i}} \text{ with } H_{M,i} = \sum\nolimits_{u,v} H_{M,\{i,i+1\},\{u,v\}},	\]
Such a mixer can be implemented with $\binom{n}{2}$ four-qubit gates, one for each of the terms $H_{M,\{i,i+1\},\{u,v\}}$,
where the four-qubit gates can be grouped in layers of $\approx n/2$ gates which can be implemented in parallel for 
a depth of $O(n)$. Using a generalized swap network~\cite{OGorman2019},  
$O(n^2)$ gates in $O(n)$ depth can even be achieved on LNN architectures.

However, two such Hamiltonians $H_{M,i}, H_{M,i+1}$ do not commute, hence exactly implementing a  
simultaneous ordering swap mixer $U_M(\beta) = e^{-i\beta \sum_i H_{M,i}}$ seems out of reach. 
Thus one may resort to a Trotterization implementing each $U_{M,i}(\beta)$ individually. 
To get transitions between all pairs of permutations, note that any permuation $\sigma$ can be reached from
any other permutation $\sigma'$ by at most $\binom{n}{2}$ transpositions, of which up to $\approx n/2$ 
can be grouped to be executed in parallel. For example, a linear number of layers of alternatingly 
all $U_{M,i}$ with odd $i$ and all $U_{M,i}$ with even $i$ will give a mixer with $O(n^4)$ gates and $O(n^2)$ depth
that preserves the feasible subspace of all permutations, 
and provides transitions between all pairs of permutations $\sigma, \sigma'$. 
Other partitions of four-qubit gates into layers may also be possible.
In any case, in comparison to a simultaenous swap mixer, the transitions do not only depend on $\beta$,
but also suffer an unwanted Trotterization error that depends on the exact partitioning of four-qubit gates. 

We can compare this to our Grover Mixer $U_M(\beta) = e^{-i\beta \ket{S_n}\bra{S_n}}$, which can be implemented
exactly with $O(n^3)$ gates in $O(n^2)$ depth on LNN architectures and provides transitions between all pairs 
of permutations without any implementation artifacts.
This also matches the numbers to implement the phase separator $U_P(\gamma) = e^{-i\gamma H_C}$ which can 
be implemented with $O(n^3)$ gates in $O(n^2)$ depth (even on LNN architectures, when using swap networks).
Hence a Grover Mixer improves on previously considered ordering swap mixers in terms of the total number of gates, while matching their depth.

\subsubsection{Generalizations}
Our GM-QAOA approach to the Traveling Salesperson Problem extends to a variety of other problems based on injective mappings. 
We give two examples:
\begin{itemize}
	\item	The Quadratic Assignment Problem~\cite{QAP} asks for a bijective assignment $f\colon L \rightarrow P$ from 
		a set of locations $L$ (with a distance function $d\colon L\times L \rightarrow \mathbb{R}_{\geq 0}$) to 
		a set of facilities $P$ (with a weight function $w\colon P\times P \rightarrow \mathbb{R}_{\geq 0}$)
		to minimize weighted distances between facilities,
		\[	\min_f \sum_{u,v \in P} w(u,v) \cdot d(f^{-1}(u), f^{-1}(p)).	\]
		This can be seen as a generalization of TSP, where we have $w(u,u+1) = 1$ and $w(u,v)=0$ otherwise.
	\item	The Maximum Common Edge Subgraph Promlem~\cite{MCES} asks, given graphs $G=(V_G,E_G)$ and $H=(V_H,E_H)$ 
		with $|V_G|\leq |V_H|$, for a common subgraph of $G$ and $H$ with a maximum number of edges. 
		This can be seen as a generalization of the graph isomorphism problem, 
		i.e. we can formulate it as looking for an injective function
		$f\colon V_G \rightarrow V_H$ matching as many edges as possible. 
\end{itemize}

\begin{figure*}[t!]
	\centering
	\begin{adjustbox}{width=\linewidth}
	%% Portfolio Rebalancing
	%% -------------
	\newcommand{\sgate}[1]{\gate{\surd#1}}
	\newcommand{\ugate}[1]{\gate[#1]{U_{#1,#1}}}
	%% Our approach
	\begin{quantikz}[row sep={24pt,between origins}]
		\lstick[4]{\rotatebox{90}{short positions}}
		&\lstick{\ket{0}}	
		& \qw\gategroup[8,steps=5,style={dashed,rounded corners, inner sep=0pt},background]{$U_S$}
		& \qw
		& \qw
		& \qw
		& \ugate{4}
		& \gate[8]{e^{-i\gamma_1 H_C}}
		& \gate[8]{e^{-i\beta_1 \ket{F}\bra{F}}}
		& \qw
		\\	
		&\lstick{\ket{0}}	& \qw			&\qw		&\qw		&\qw\slice{}	&\qw		&\qw		&\qw		&\qw		\\	
		&\lstick{\ket{0}}	& \qw			&\sgate{16/22}	&\qw		&\ctrl{2}	&\qw		&\qw		&\qw		&\qw		\\	
		&\lstick{\ket{0}}	& \sgate{6/28}		&\ctrl{-1}	&\ctrl{2}	&\qw		&\qw		&\qw		&\qw		&\qw		\\	
		\lstick[4]{\rotatebox{90}{long positions}}
		&\lstick{\ket{0}}	& \qw			&\qw		&\qw		&\targ{}	&\ugate{4}	&\qw		&\qw		&\qw		\\	
		&\lstick{\ket{0}}	& \qw			&\qw		&\targ{}	&\qw		&\qw		&\qw		&\qw		&\qw		\\	
		&\lstick{\ket{0}}	& \targ{}		&\qw		&\qw		&\qw		&\qw		&\qw		&\qw		&\qw		\\	
		&\lstick{\ket{0}}	& \targ{}		&\qw		&\qw		&\qw		&\qw		&\qw		&\qw		&\qw		
	\end{quantikz}
	\qquad\qquad
	%% Original approach
	\newcommand{\mgate}[1]{\gate[4]{\rotatebox{90}{$\mathclap{e^{-i\beta_{#1} \xyring}}$}}U_{4,4}}
	\begin{quantikz}[row sep={24pt,between origins}]
		\lstick{\ket{0}}	
		& \qw\gategroup[8,steps=4,style={dashed,rounded corners, inner sep=0pt},background]{$U_S$}
		& \qw
		& \qw\slice{}
		& \mgate{0}
		& \gate[8]{e^{-i\gamma H_C}}
		& \mgate{0}
		& \qw
		\\	
		\lstick{\ket{0}}	& \qw			&\qw		&\qw		&\qw		&\qw		&\qw		&\qw		\\	
		\lstick{\ket{0}}	& \gate{H}		&\qw		&\ctrl{2}	&\qw		&\qw		&\qw		&\qw		\\	
		\lstick{\ket{0}}	& \gate{H}		&\ctrl{2}	&\qw		&\qw		&\qw		&\qw		&\qw		\\	
		\lstick{\ket{0}}	& \qw			&\qw		&\targ{}	&\mgate{0}	&\qw		&\mgate{1}	&\qw		\\	
		\lstick{\ket{0}}	& \qw			&\targ{}	&\qw		&\qw		&\qw		&\qw		&\qw		\\	
		\lstick{\ket{0}}	& \targ{}		&\qw		&\qw		&\qw		&\qw		&\qw		&\qw		\\	
		\lstick{\ket{0}}	& \targ{}		&\qw		&\qw		&\qw		&\qw		&\qw		&\qw		
	\end{quantikz}
	\end{adjustbox}
	\caption{Discrete Portfolio Rebalancing QAOA where we invest in long and short positions for a net total of $2$ discrete lots, 
	with bands $2-0 = 3-1 = 4-2 =2$.
	(left) GM-QAOA: We first create a superposition 
	assigning the correct weights to each band,
	$\sqrt{\tfrac{6}{28}}\ket{1100}\ket{0000}+ \sqrt{\tfrac{16}{28}}\ket{1110}\ket{1000}+ \sqrt{\tfrac{6}{28}}\ket{1111}\ket{1100}$.
	Fed into two Dicke state preparation unitaries $U_{4,4}$, this gives the equal superposition over all combinations of long and short positions for 2 discrete lots.
	(right) Previous approach: Bell states of short and long positions give the correct net total, but lead to a binomial weight distribution over the different bands.
	Furthermore, using Hamming-weight preserving $XY$-ring mixers for both short and long positions individually prohibits a mixing between different bands.}
	\label{fig:portfolio}
\end{figure*}

Finally, we remark that our approach can also cover problems based on the alternating group $A_n$, 
i.e. the set of all permutations $\sigma$ with an even number of inversions (elements $i,j$ with $\sigma(i)>\sigma(j)$). 
For an permutation in string notation $x = x_{n^2-1}\dots x_1x_0 \in A_n$, 
this means that the number of pairs $r<c$, for which $x_{r\cdot n+c} = 1$ holds, must be even. 

This number, modulo 2, can easily be computed in an ancilla in the following way: 
After the completion of row $r$ and the following bitmask update, 
the number of inversions due to row $r$ can be computed from row $r$ and the bitmask 
(in linear depth, if we transform the bitmask into a prefix bitmask first, and uncompute it again).  
Then, when processing the second last row, we no longer generate controlled $W_2$-states, but actually 
only a $\ket{1}$ in the correct place, such that the last two rows complete an even permutation.
This can be done controlled on the bitmask \emph{and} the ancilla qubit, followed by an 
uncomputation of the ancilla before the last bitmask update. 
For an interactive example, see
\href{https://algassert.com/quirk#circuit=%7B%22cols%22%3A%5B%5B%22~94a8%22%2C1%2C1%2C1%2C1%2C1%2C%22X%22%2C%22X%22%2C%22X%22%5D%2C%5B%22Chance3%22%2C1%2C1%2C%22Chance3%22%2C1%2C1%2C%22Chance3%22%2C1%2C1%2C%22Chance%22%5D%2C%5B%22%E2%80%A2%22%2C1%2C1%2C1%2C1%2C1%2C%22X%22%5D%2C%5B1%2C%22%E2%80%A2%22%2C1%2C1%2C1%2C1%2C1%2C%22X%22%5D%2C%5B1%2C1%2C%22%E2%80%A2%22%2C1%2C1%2C1%2C1%2C1%2C%22X%22%5D%2C%5B%22Chance3%22%2C1%2C1%2C%22Chance3%22%2C1%2C1%2C%22Chance3%22%2C1%2C1%2C%22Chance%22%5D%2C%5B1%2C1%2C1%2C1%2C1%2C1%2C%22%E2%80%A2%22%2C%22X%22%5D%2C%5B1%2C1%2C1%2C1%2C1%2C1%2C1%2C%22%E2%80%A2%22%2C%22X%22%5D%2C%5B1%2C%22%E2%80%A2%22%2C1%2C1%2C1%2C1%2C%22%E2%80%A2%22%2C1%2C1%2C%22X%22%5D%2C%5B1%2C1%2C%22%E2%80%A2%22%2C1%2C1%2C1%2C1%2C%22%E2%80%A2%22%2C1%2C%22X%22%5D%2C%5B1%2C1%2C1%2C1%2C1%2C1%2C1%2C%22%E2%80%A2%22%2C%22X%22%5D%2C%5B1%2C1%2C1%2C1%2C1%2C1%2C%22%E2%80%A2%22%2C%22X%22%5D%2C%5B%22Chance3%22%2C1%2C1%2C%22Chance3%22%2C1%2C1%2C%22Chance3%22%2C1%2C1%2C%22Chance%22%5D%2C%5B1%2C1%2C1%2C1%2C%22X%22%2C1%2C%22%E2%80%A2%22%2C%22%E2%80%A2%22%2C1%2C%22%E2%80%A2%22%5D%2C%5B1%2C1%2C1%2C%22X%22%2C1%2C1%2C%22%E2%80%A2%22%2C%22%E2%80%A2%22%2C1%2C%22%E2%97%A6%22%5D%2C%5B1%2C1%2C1%2C1%2C1%2C%22X%22%2C1%2C%22%E2%80%A2%22%2C%22%E2%80%A2%22%2C%22%E2%80%A2%22%5D%2C%5B1%2C1%2C1%2C1%2C%22X%22%2C1%2C1%2C%22%E2%80%A2%22%2C%22%E2%80%A2%22%2C%22%E2%97%A6%22%5D%2C%5B1%2C1%2C1%2C1%2C1%2C%22X%22%2C%22%E2%80%A2%22%2C1%2C%22%E2%80%A2%22%2C%22%E2%80%A2%22%5D%2C%5B1%2C1%2C1%2C%22X%22%2C1%2C1%2C%22%E2%80%A2%22%2C1%2C%22%E2%80%A2%22%2C%22%E2%97%A6%22%5D%2C%5B%22Chance3%22%2C1%2C1%2C%22Chance3%22%2C1%2C1%2C%22Chance3%22%2C1%2C1%2C%22Chance%22%5D%2C%5B1%2C1%2C1%2C1%2C%22%E2%80%A2%22%2C1%2C%22%E2%80%A2%22%2C%22%E2%80%A2%22%2C1%2C%22X%22%5D%2C%5B1%2C1%2C1%2C1%2C1%2C%22%E2%80%A2%22%2C1%2C%22%E2%80%A2%22%2C%22%E2%80%A2%22%2C%22X%22%5D%2C%5B1%2C1%2C1%2C1%2C1%2C%22%E2%80%A2%22%2C%22%E2%80%A2%22%2C1%2C%22%E2%80%A2%22%2C%22X%22%5D%2C%5B%22Chance3%22%2C1%2C1%2C%22Chance3%22%2C1%2C1%2C%22Chance3%22%2C1%2C1%2C%22Chance%22%5D%2C%5B1%2C1%2C1%2C%22%E2%80%A2%22%2C1%2C1%2C%22X%22%5D%2C%5B1%2C1%2C1%2C1%2C%22%E2%80%A2%22%2C1%2C1%2C%22X%22%5D%2C%5B1%2C1%2C1%2C1%2C1%2C%22%E2%80%A2%22%2C1%2C1%2C%22X%22%5D%2C%5B%22Chance3%22%2C1%2C1%2C%22Chance3%22%2C1%2C1%2C%22Chance3%22%2C1%2C1%2C%22Chance%22%5D%2C%5B%22%E2%80%A6%22%5D%2C%5B%22Chance%22%2C%22Chance%22%2C%22Chance%22%2C%22Chance%22%2C%22Chance%22%2C%22Chance%22%2C%22Chance%22%2C%22Chance%22%2C%22Chance%22%5D%2C%5B%22Bloch%22%2C%22Bloch%22%2C%22Bloch%22%2C%22Bloch%22%2C%22Bloch%22%2C%22Bloch%22%2C%22Bloch%22%2C%22Bloch%22%2C%22Bloch%22%5D%2C%5B%22Amps9%22%5D%5D%2C%22gates%22%3A%5B%7B%22id%22%3A%22~590m%22%2C%22name%22%3A%22%E2%88%9A1%2F3%22%2C%22matrix%22%3A%22%7B%7B%E2%88%9A%E2%85%93%2C-%E2%88%9A%E2%85%94%7D%2C%7B%E2%88%9A%E2%85%94%2C%E2%88%9A%E2%85%93%7D%7D%22%7D%2C%7B%22id%22%3A%22~94a8%22%2C%22name%22%3A%22W3%22%2C%22circuit%22%3A%7B%22cols%22%3A%5B%5B%22X%22%5D%2C%5B%22%E2%80%A2%22%2C%22~590m%22%5D%2C%5B%22X%22%2C%22%E2%80%A2%22%2C%22H%22%5D%2C%5B1%2C%22X%22%2C%22%E2%80%A2%22%5D%5D%7D%7D%2C%7B%22id%22%3A%22~mv19%22%2C%22name%22%3A%22W2%22%2C%22circuit%22%3A%7B%22cols%22%3A%5B%5B%22X%22%5D%2C%5B%22%E2%80%A2%22%2C%22H%22%5D%2C%5B%22X%22%2C%22%E2%80%A2%22%5D%5D%7D%7D%2C%7B%22id%22%3A%22~bvbm%22%2C%22name%22%3A%22W2'%22%2C%22circuit%22%3A%7B%22cols%22%3A%5B%5B%22X%22%5D%2C%5B%22%E2%80%A2%22%2C1%2C%22H%22%5D%2C%5B%22X%22%2C1%2C%22%E2%80%A2%22%5D%5D%7D%7D%5D%7D
}{here (for $A_3$)}.
With a bit more work, the ancilla can even be included in the first qubit of the second last row, 
for an interactive example, see
\href{https://algassert.com/quirk#circuit=%7B%22cols%22%3A%5B%5B%22~94a8%22%2C1%2C1%2C1%2C1%2C1%2C%22X%22%2C%22X%22%2C%22X%22%5D%2C%5B%22Chance3%22%2C1%2C1%2C%22Chance3%22%2C1%2C1%2C%22Chance3%22%2C1%2C1%2C%22Chance%22%5D%2C%5B%22%E2%80%A2%22%2C1%2C1%2C1%2C1%2C1%2C%22X%22%5D%2C%5B1%2C%22%E2%80%A2%22%2C1%2C1%2C1%2C1%2C1%2C%22X%22%5D%2C%5B1%2C1%2C%22%E2%80%A2%22%2C1%2C1%2C1%2C1%2C1%2C%22X%22%5D%2C%5B%22Chance3%22%2C1%2C1%2C%22Chance3%22%2C1%2C1%2C%22Chance3%22%2C1%2C1%2C%22Chance%22%5D%2C%5B1%2C1%2C1%2C1%2C1%2C1%2C%22%E2%80%A2%22%2C%22X%22%5D%2C%5B1%2C1%2C1%2C1%2C1%2C1%2C1%2C%22%E2%80%A2%22%2C%22X%22%5D%2C%5B1%2C%22%E2%80%A2%22%2C1%2C%22X%22%2C1%2C1%2C%22%E2%80%A2%22%5D%2C%5B1%2C1%2C%22%E2%80%A2%22%2C%22X%22%2C1%2C1%2C1%2C%22%E2%80%A2%22%5D%2C%5B1%2C1%2C1%2C1%2C1%2C1%2C1%2C%22%E2%80%A2%22%2C%22X%22%5D%2C%5B1%2C1%2C1%2C1%2C1%2C1%2C%22%E2%80%A2%22%2C%22X%22%5D%2C%5B%22Chance3%22%2C1%2C1%2C%22Chance3%22%2C1%2C1%2C%22Chance3%22%2C1%2C1%2C%22Chance%22%5D%2C%5B1%2C1%2C1%2C%22%E2%80%A2%22%2C%22X%22%2C1%2C%22%E2%80%A2%22%2C%22%E2%80%A2%22%5D%2C%5B1%2C1%2C1%2C%22%E2%80%A2%22%2C1%2C%22X%22%2C%22%E2%80%A2%22%2C1%2C%22%E2%80%A2%22%5D%2C%5B1%2C1%2C1%2C%22%E2%80%A2%22%2C1%2C%22X%22%2C1%2C%22%E2%80%A2%22%2C%22%E2%80%A2%22%5D%2C%5B1%2C1%2C1%2C%22%E2%97%A6%22%2C%22X%22%2C1%2C1%2C%22%E2%80%A2%22%2C%22%E2%80%A2%22%5D%2C%5B%22Chance3%22%2C1%2C1%2C%22Chance3%22%2C1%2C1%2C%22Chance3%22%2C1%2C1%2C%22Chance%22%5D%2C%5B1%2C1%2C1%2C%22X%22%2C1%2C1%2C%22%E2%80%A2%22%2C%22%E2%80%A2%22%5D%2C%5B1%2C1%2C1%2C%22X%22%2C1%2C1%2C%22%E2%80%A2%22%2C1%2C%22%E2%80%A2%22%5D%2C%5B1%2C1%2C1%2C%22X%22%2C1%2C%22%E2%80%A2%22%2C1%2C%22%E2%80%A2%22%2C%22%E2%80%A2%22%5D%2C%5B%22Chance3%22%2C1%2C1%2C%22Chance3%22%2C1%2C1%2C%22Chance3%22%2C1%2C1%2C%22Chance%22%5D%2C%5B1%2C1%2C1%2C%22%E2%80%A2%22%2C1%2C1%2C%22X%22%5D%2C%5B1%2C1%2C1%2C1%2C%22%E2%80%A2%22%2C1%2C1%2C%22X%22%5D%2C%5B1%2C1%2C1%2C1%2C1%2C%22%E2%80%A2%22%2C1%2C1%2C%22X%22%5D%2C%5B%22Chance3%22%2C1%2C1%2C%22Chance3%22%2C1%2C1%2C%22Chance3%22%2C1%2C1%2C%22Chance%22%5D%2C%5B%22%E2%80%A6%22%5D%2C%5B%22Chance%22%2C%22Chance%22%2C%22Chance%22%2C%22Chance%22%2C%22Chance%22%2C%22Chance%22%2C%22Chance%22%2C%22Chance%22%2C%22Chance%22%5D%2C%5B%22Bloch%22%2C%22Bloch%22%2C%22Bloch%22%2C%22Bloch%22%2C%22Bloch%22%2C%22Bloch%22%2C%22Bloch%22%2C%22Bloch%22%2C%22Bloch%22%5D%2C%5B%22Amps9%22%5D%5D%2C%22gates%22%3A%5B%7B%22id%22%3A%22~590m%22%2C%22name%22%3A%22%E2%88%9A1%2F3%22%2C%22matrix%22%3A%22%7B%7B%E2%88%9A%E2%85%93%2C-%E2%88%9A%E2%85%94%7D%2C%7B%E2%88%9A%E2%85%94%2C%E2%88%9A%E2%85%93%7D%7D%22%7D%2C%7B%22id%22%3A%22~94a8%22%2C%22name%22%3A%22W3%22%2C%22circuit%22%3A%7B%22cols%22%3A%5B%5B%22X%22%5D%2C%5B%22%E2%80%A2%22%2C%22~590m%22%5D%2C%5B%22X%22%2C%22%E2%80%A2%22%2C%22H%22%5D%2C%5B1%2C%22X%22%2C%22%E2%80%A2%22%5D%5D%7D%7D%2C%7B%22id%22%3A%22~mv19%22%2C%22name%22%3A%22W2%22%2C%22circuit%22%3A%7B%22cols%22%3A%5B%5B%22X%22%5D%2C%5B%22%E2%80%A2%22%2C%22H%22%5D%2C%5B%22X%22%2C%22%E2%80%A2%22%5D%5D%7D%7D%2C%7B%22id%22%3A%22~bvbm%22%2C%22name%22%3A%22W2'%22%2C%22circuit%22%3A%7B%22cols%22%3A%5B%5B%22X%22%5D%2C%5B%22%E2%80%A2%22%2C1%2C%22H%22%5D%2C%5B%22X%22%2C1%2C%22%E2%80%A2%22%5D%5D%7D%7D%5D%7D
}{here (for $A_3$)}, or
\href{https://algassert.com/quirk#circuit=%7B%22cols%22%3A%5B%5B%22~bo2r%22%2C1%2C1%2C1%2C1%2C1%2C1%2C1%2C1%2C1%2C1%2C1%2C%22X%22%2C%22X%22%2C%22X%22%2C%22X%22%5D%2C%5B%22Chance4%22%2C1%2C1%2C1%2C%22Chance4%22%2C1%2C1%2C1%2C%22Chance4%22%2C1%2C1%2C1%2C%22Chance4%22%5D%2C%5B%22%E2%80%A2%22%2C1%2C1%2C1%2C1%2C1%2C1%2C1%2C1%2C1%2C1%2C1%2C%22X%22%5D%2C%5B1%2C%22%E2%80%A2%22%2C1%2C1%2C1%2C1%2C1%2C1%2C1%2C1%2C1%2C1%2C1%2C%22X%22%5D%2C%5B1%2C1%2C%22%E2%80%A2%22%2C1%2C1%2C1%2C1%2C1%2C1%2C1%2C1%2C1%2C1%2C1%2C%22X%22%5D%2C%5B1%2C1%2C1%2C%22%E2%80%A2%22%2C1%2C1%2C1%2C1%2C1%2C1%2C1%2C1%2C1%2C1%2C1%2C%22X%22%5D%2C%5B%22Chance4%22%2C1%2C1%2C1%2C%22Chance4%22%2C1%2C1%2C1%2C%22Chance4%22%2C1%2C1%2C1%2C%22Chance4%22%5D%2C%5B1%2C1%2C1%2C1%2C1%2C1%2C1%2C1%2C1%2C1%2C1%2C1%2C%22%E2%80%A2%22%2C%22X%22%5D%2C%5B1%2C1%2C1%2C1%2C1%2C1%2C1%2C1%2C1%2C1%2C1%2C1%2C1%2C%22%E2%80%A2%22%2C%22X%22%5D%2C%5B1%2C%22%E2%80%A2%22%2C1%2C1%2C1%2C1%2C1%2C1%2C%22X%22%2C1%2C1%2C1%2C%22%E2%80%A2%22%5D%2C%5B1%2C1%2C%22%E2%80%A2%22%2C1%2C1%2C1%2C1%2C1%2C%22X%22%2C1%2C1%2C1%2C1%2C%22%E2%80%A2%22%5D%2C%5B1%2C1%2C1%2C%22%E2%80%A2%22%2C1%2C1%2C1%2C1%2C%22X%22%2C1%2C1%2C1%2C1%2C1%2C%22%E2%80%A2%22%5D%2C%5B1%2C1%2C1%2C1%2C1%2C1%2C1%2C1%2C1%2C1%2C1%2C1%2C1%2C%22%E2%80%A2%22%2C%22X%22%5D%2C%5B1%2C1%2C1%2C1%2C1%2C1%2C1%2C1%2C1%2C1%2C1%2C1%2C%22%E2%80%A2%22%2C%22X%22%5D%2C%5B%22Chance4%22%2C1%2C1%2C1%2C%22Chance4%22%2C1%2C1%2C1%2C%22Chance4%22%2C1%2C1%2C1%2C%22Chance4%22%5D%2C%5B1%2C1%2C1%2C1%2C1%2C1%2C1%2C1%2C1%2C%22~94a8%22%5D%2C%5B1%2C1%2C1%2C1%2C1%2C1%2C1%2C1%2C1%2C%22Chance3%22%5D%2C%5B1%2C1%2C1%2C1%2C%22Swap%22%2C1%2C1%2C1%2C1%2C1%2C1%2C%22Swap%22%2C%22%E2%80%A2%22%5D%2C%5B1%2C1%2C1%2C1%2C1%2C1%2C1%2C1%2C1%2C%22%3C%3C3%22%2C1%2C1%2C%22%E2%80%A2%22%5D%2C%5B1%2C1%2C1%2C1%2C1%2C%22Swap%22%2C1%2C1%2C1%2C1%2C1%2C%22Swap%22%2C1%2C%22%E2%80%A2%22%5D%2C%5B1%2C1%2C1%2C1%2C1%2C1%2C1%2C1%2C1%2C%22%3C%3C3%22%2C1%2C1%2C1%2C%22%E2%80%A2%22%5D%2C%5B1%2C1%2C1%2C1%2C1%2C1%2C%22Swap%22%2C1%2C1%2C1%2C1%2C%22Swap%22%2C1%2C1%2C%22%E2%80%A2%22%5D%2C%5B1%2C1%2C1%2C1%2C1%2C1%2C1%2C1%2C1%2C%22%3C%3C3%22%2C1%2C1%2C1%2C1%2C%22%E2%80%A2%22%5D%2C%5B1%2C1%2C1%2C1%2C1%2C1%2C1%2C%22Swap%22%2C1%2C1%2C1%2C%22Swap%22%2C1%2C1%2C1%2C%22%E2%80%A2%22%5D%2C%5B1%2C1%2C1%2C1%2C1%2C1%2C1%2C1%2C1%2C%22%3C%3C3%22%2C1%2C1%2C1%2C1%2C1%2C%22%E2%80%A2%22%5D%2C%5B%22Chance4%22%2C1%2C1%2C1%2C%22Chance4%22%2C1%2C1%2C1%2C%22Chance4%22%2C1%2C1%2C1%2C%22Chance4%22%5D%2C%5B1%2C1%2C1%2C1%2C%22%E2%80%A2%22%2C1%2C1%2C1%2C1%2C1%2C1%2C1%2C%22X%22%5D%2C%5B1%2C1%2C1%2C1%2C1%2C%22%E2%80%A2%22%2C1%2C1%2C1%2C1%2C1%2C1%2C1%2C%22X%22%5D%2C%5B1%2C1%2C1%2C1%2C1%2C1%2C%22%E2%80%A2%22%2C1%2C1%2C1%2C1%2C1%2C1%2C1%2C%22X%22%5D%2C%5B1%2C1%2C1%2C1%2C1%2C1%2C1%2C%22%E2%80%A2%22%2C1%2C1%2C1%2C1%2C1%2C1%2C1%2C%22X%22%5D%2C%5B%22Chance4%22%2C1%2C1%2C1%2C%22Chance4%22%2C1%2C1%2C1%2C%22Chance4%22%2C1%2C1%2C1%2C%22Chance4%22%5D%2C%5B1%2C1%2C1%2C1%2C1%2C1%2C1%2C1%2C1%2C1%2C1%2C1%2C%22%E2%80%A2%22%2C%22X%22%5D%2C%5B1%2C1%2C1%2C1%2C1%2C1%2C1%2C1%2C1%2C1%2C1%2C1%2C1%2C%22%E2%80%A2%22%2C%22X%22%5D%2C%5B1%2C1%2C1%2C1%2C1%2C%22%E2%80%A2%22%2C1%2C1%2C%22X%22%2C1%2C1%2C1%2C%22%E2%80%A2%22%5D%2C%5B1%2C1%2C1%2C1%2C1%2C1%2C%22%E2%80%A2%22%2C1%2C%22X%22%2C1%2C1%2C1%2C1%2C%22%E2%80%A2%22%5D%2C%5B1%2C1%2C1%2C1%2C1%2C1%2C1%2C%22%E2%80%A2%22%2C%22X%22%2C1%2C1%2C1%2C1%2C1%2C%22%E2%80%A2%22%5D%2C%5B1%2C1%2C1%2C1%2C1%2C1%2C1%2C1%2C1%2C1%2C1%2C1%2C1%2C%22%E2%80%A2%22%2C%22X%22%5D%2C%5B1%2C1%2C1%2C1%2C1%2C1%2C1%2C1%2C1%2C1%2C1%2C1%2C%22%E2%80%A2%22%2C%22X%22%5D%2C%5B%22Chance4%22%2C1%2C1%2C1%2C%22Chance4%22%2C1%2C1%2C1%2C%22Chance4%22%2C1%2C1%2C1%2C%22Chance4%22%5D%2C%5B1%2C1%2C1%2C1%2C1%2C1%2C1%2C1%2C%22%E2%80%A2%22%2C%22X%22%2C1%2C1%2C%22%E2%80%A2%22%2C%22%E2%80%A2%22%5D%2C%5B1%2C1%2C1%2C1%2C1%2C1%2C1%2C1%2C%22%E2%80%A2%22%2C1%2C%22X%22%2C1%2C%22%E2%80%A2%22%2C1%2C%22%E2%80%A2%22%5D%2C%5B1%2C1%2C1%2C1%2C1%2C1%2C1%2C1%2C%22%E2%80%A2%22%2C1%2C1%2C%22X%22%2C%22%E2%80%A2%22%2C1%2C1%2C%22%E2%80%A2%22%5D%2C%5B1%2C1%2C1%2C1%2C1%2C1%2C1%2C1%2C%22%E2%80%A2%22%2C1%2C%22X%22%2C1%2C1%2C%22%E2%80%A2%22%2C%22%E2%80%A2%22%5D%2C%5B1%2C1%2C1%2C1%2C1%2C1%2C1%2C1%2C%22%E2%97%A6%22%2C%22X%22%2C1%2C1%2C1%2C%22%E2%80%A2%22%2C%22%E2%80%A2%22%5D%2C%5B1%2C1%2C1%2C1%2C1%2C1%2C1%2C1%2C%22%E2%80%A2%22%2C1%2C1%2C%22X%22%2C1%2C1%2C%22%E2%80%A2%22%2C%22%E2%80%A2%22%5D%2C%5B1%2C1%2C1%2C1%2C1%2C1%2C1%2C1%2C%22%E2%97%A6%22%2C1%2C%22X%22%2C1%2C1%2C1%2C%22%E2%80%A2%22%2C%22%E2%80%A2%22%5D%2C%5B1%2C1%2C1%2C1%2C1%2C1%2C1%2C1%2C%22%E2%80%A2%22%2C1%2C1%2C%22X%22%2C1%2C%22%E2%80%A2%22%2C1%2C%22%E2%80%A2%22%5D%2C%5B1%2C1%2C1%2C1%2C1%2C1%2C1%2C1%2C%22%E2%97%A6%22%2C%22X%22%2C1%2C1%2C1%2C%22%E2%80%A2%22%2C1%2C%22%E2%80%A2%22%5D%2C%5B%22Chance4%22%2C1%2C1%2C1%2C%22Chance4%22%2C1%2C1%2C1%2C%22Chance4%22%2C1%2C1%2C1%2C%22Chance4%22%5D%2C%5B1%2C1%2C1%2C1%2C1%2C1%2C1%2C1%2C%22X%22%2C1%2C1%2C1%2C%22%E2%80%A2%22%2C%22%E2%80%A2%22%5D%2C%5B1%2C1%2C1%2C1%2C1%2C1%2C1%2C1%2C%22X%22%2C1%2C1%2C1%2C%22%E2%80%A2%22%2C1%2C%22%E2%80%A2%22%5D%2C%5B1%2C1%2C1%2C1%2C1%2C1%2C1%2C1%2C%22X%22%2C1%2C1%2C1%2C%22%E2%80%A2%22%2C1%2C1%2C%22%E2%80%A2%22%5D%2C%5B1%2C1%2C1%2C1%2C1%2C1%2C1%2C1%2C%22X%22%2C1%2C%22%E2%80%A2%22%2C1%2C1%2C%22%E2%80%A2%22%2C%22%E2%80%A2%22%5D%2C%5B1%2C1%2C1%2C1%2C1%2C1%2C1%2C1%2C%22X%22%2C1%2C1%2C%22%E2%80%A2%22%2C1%2C1%2C%22%E2%80%A2%22%2C%22%E2%80%A2%22%5D%2C%5B1%2C1%2C1%2C1%2C1%2C1%2C1%2C1%2C%22X%22%2C1%2C1%2C%22%E2%80%A2%22%2C1%2C%22%E2%80%A2%22%2C1%2C%22%E2%80%A2%22%5D%2C%5B%22Chance4%22%2C1%2C1%2C1%2C%22Chance4%22%2C1%2C1%2C1%2C%22Chance4%22%2C1%2C1%2C1%2C%22Chance4%22%5D%2C%5B1%2C1%2C1%2C1%2C1%2C1%2C1%2C1%2C%22%E2%80%A2%22%2C1%2C1%2C1%2C%22X%22%5D%2C%5B1%2C1%2C1%2C1%2C1%2C1%2C1%2C1%2C1%2C%22%E2%80%A2%22%2C1%2C1%2C1%2C%22X%22%5D%2C%5B1%2C1%2C1%2C1%2C1%2C1%2C1%2C1%2C1%2C1%2C%22%E2%80%A2%22%2C1%2C1%2C1%2C%22X%22%5D%2C%5B1%2C1%2C1%2C1%2C1%2C1%2C1%2C1%2C1%2C1%2C1%2C%22%E2%80%A2%22%2C1%2C1%2C1%2C%22X%22%5D%2C%5B%22Chance4%22%2C1%2C1%2C1%2C%22Chance4%22%2C1%2C1%2C1%2C%22Chance4%22%2C1%2C1%2C1%2C%22Chance4%22%5D%5D%2C%22gates%22%3A%5B%7B%22id%22%3A%22~590m%22%2C%22name%22%3A%22%E2%88%9A1%2F3%22%2C%22matrix%22%3A%22%7B%7B%E2%88%9A%E2%85%93%2C-%E2%88%9A%E2%85%94%7D%2C%7B%E2%88%9A%E2%85%94%2C%E2%88%9A%E2%85%93%7D%7D%22%7D%2C%7B%22id%22%3A%22~bo2r%22%2C%22name%22%3A%22W4%22%2C%22circuit%22%3A%7B%22cols%22%3A%5B%5B%22X%22%5D%2C%5B%22%E2%80%A2%22%2C%22H%22%5D%2C%5B%22X%22%2C%22%E2%80%A2%22%2C%22H%22%5D%2C%5B1%2C%22X%22%2C%22%E2%80%A2%22%5D%2C%5B%22%E2%80%A2%22%2C1%2C1%2C%22H%22%5D%2C%5B%22X%22%2C1%2C1%2C%22%E2%80%A2%22%5D%5D%7D%7D%2C%7B%22id%22%3A%22~94a8%22%2C%22name%22%3A%22W3%22%2C%22circuit%22%3A%7B%22cols%22%3A%5B%5B%22X%22%5D%2C%5B%22%E2%80%A2%22%2C%22~590m%22%5D%2C%5B%22X%22%2C%22%E2%80%A2%22%2C%22H%22%5D%2C%5B1%2C%22X%22%2C%22%E2%80%A2%22%5D%5D%7D%7D%2C%7B%22id%22%3A%22~mv19%22%2C%22name%22%3A%22W2%22%2C%22circuit%22%3A%7B%22cols%22%3A%5B%5B%22X%22%5D%2C%5B%22%E2%80%A2%22%2C%22H%22%5D%2C%5B%22X%22%2C%22%E2%80%A2%22%5D%5D%7D%7D%2C%7B%22id%22%3A%22~bvbm%22%2C%22name%22%3A%22W2'%22%2C%22circuit%22%3A%7B%22cols%22%3A%5B%5B%22X%22%5D%2C%5B%22%E2%80%A2%22%2C1%2C%22H%22%5D%2C%5B%22X%22%2C1%2C%22%E2%80%A2%22%5D%5D%7D%7D%2C%7B%22id%22%3A%22~plrr%22%2C%22name%22%3A%22W2''%22%2C%22circuit%22%3A%7B%22cols%22%3A%5B%5B%22X%22%5D%2C%5B%22%E2%80%A2%22%2C1%2C1%2C%22H%22%5D%2C%5B%22X%22%2C1%2C1%2C%22%E2%80%A2%22%5D%5D%7D%7D%5D%7D
}{here (for $A_4$)}.

\subsection{Discrete Portfolio Rebalancing}

As our third and last example we consider Discrete Portfolio Rebalancing, which was recently suggested as a first financial application for QAOA~\cite{hodson2019portfolio}.
In its simplest form, we are given a number of assets and 
a portfolio of short and long positions on these assets.
Periodically, such a portfolio has to be rebalanced in order to maintain in order to react to market and risk changes.
For all details regarding modeling and cost calculations we refer to the original article, here we focus on the the part pertaining to state preparation and mixer design.

\subsubsection{Problem encoding}

In the simplest form of Discrete Portfolio Rebalancing, we are
given $n$ assets. On each asset we may have a short position, a long position or neither, with the condition that we allocate
long and short positions for a net total (number of long positions minus number of short positions) of $d$ discrete lots. 

We encode a rebalanced portfolio with two length-$n$ bitstrings 
$\ell = \ell_{n}\ldots \ell_1$ and $s = s_n\ldots s_1$, 
where $\ell_i = 1$ stands for a long position on asset $i$, and 
$s_i=1$ for a short position, respectively. 
The cost function $C(\ell,s)$ depends on several aspects:
the risk-return of $(\ell,s)$, the trading costs of rebalancing the outdated portfolio into $(\ell,s)$, and a penalty term for the unwanted case of holding a long and a short position of the same asset ($\ell_i=s_i =1$).

A portfolio $(\ell,s)$ is said to belong to a \emph{band} $k$, if it has $k$ short positions, i.e. if $s$ has Hamming weight $HW(s)=k$.
There are $n-d+1$ different bands $k=0, \ldots, n-d$.
To achieve a given band $k$, we can choose $\binom{n}{k}$ different combinations of short positions 
and $\binom{n}{d+k}$ combinations of long positions for a total of $\binom{n}{d+k}\binom{n}{k}$ different portfolios.

\subsubsection{State preparation}
In order to create an equal superposition of all portfolios of net total $d$ lots, we precompute all the number of portfolios in each band $k$ as well as their prefix sums. For example, given 4 assets and a fixet net total of 2 discrete lots, there are
$\binom{4}{2}\binom{4}{0} = 6$ portfolios in band 0,
$\binom{4}{3}\binom{4}{1} = 16$ portfolios in band 1, and
$\binom{4}{4}\binom{4}{2} = 6$ portfolios in band 2.
The corresponding prefix sums are 6, 22, 28.

Next we give a brief overview of the Dicke state preparation unitaries in Theorem~\ref{thm:dicke}: 
In their most general form, they consist of unitaries $U_{n,n}$ of size $O(n^2)$ and depth $O(n)$, which map any computational basis states $\ket{1}^{\otimes k}\ket{0}^{\otimes n-k}$ to the equal superposition of all Hamming-weight-k states, namely the Dicke state:
\[  U_{n,n}\colon \ket{1}^{\otimes k}\ket{0}^{\otimes n-k} \mapsto \dicke{n}{k}.   \]
By linearity $U_{n,n}$ thus also maps superpositions of such input states to superpositions of Dicke states (symmetric states).

In particular, this means that to create a superposition of all portfolios, we can start with a correctly weighted superposition of bands, followed by one Dicke state unitary $U_{n,n}$ each for the $\ell$ and the $s$ register. 
Such weighted superpositions can be implemented with a stair of controlled $Y$-rotations with angles based on the precomputed band and prefix weights~\cite{dickestates}. 
For an outline see Figure~\ref{fig:portfolio} (left) and a
\href{https://algassert.com/quirk#circuit=%7B%22cols%22%3A%5B%5B1%2C1%2C1%2C%22~nuje%22%2C1%2C1%2C%22X%22%2C%22X%22%5D%2C%5B1%2C1%2C%22~hlnc%22%2C%22%E2%80%A2%22%5D%2C%5B1%2C1%2C1%2C%22%E2%80%A2%22%2C1%2C%22X%22%5D%2C%5B1%2C1%2C%22%E2%80%A2%22%2C1%2C%22X%22%5D%2C%5B1%2C1%2C%22Chance2%22%2C1%2C%22Chance2%22%5D%2C%5B%22~sj0p%22%2C1%2C1%2C1%2C%22~sj0p%22%5D%5D%2C%22gates%22%3A%5B%7B%22id%22%3A%22~vl6q%22%2C%22name%22%3A%22%E2%88%9A1%2F4%22%2C%22circuit%22%3A%7B%22cols%22%3A%5B%5B%7B%22id%22%3A%22Ryft%22%2C%22arg%22%3A%222acos(sqrt(1%2F4))%22%7D%5D%5D%7D%7D%2C%7B%22id%22%3A%22~tia4%22%2C%22name%22%3A%22%E2%88%9A2%2F4%22%2C%22circuit%22%3A%7B%22cols%22%3A%5B%5B%7B%22id%22%3A%22Ryft%22%2C%22arg%22%3A%222acos(sqrt(2%2F4))%22%7D%5D%5D%7D%7D%2C%7B%22id%22%3A%22~iel2%22%2C%22name%22%3A%22%E2%88%9A3%2F4%22%2C%22circuit%22%3A%7B%22cols%22%3A%5B%5B%7B%22id%22%3A%22Ryft%22%2C%22arg%22%3A%222acos(sqrt(3%2F4))%22%7D%5D%5D%7D%7D%2C%7B%22id%22%3A%22~fn9r%22%2C%22name%22%3A%22%E2%88%9A1%2F3%22%2C%22circuit%22%3A%7B%22cols%22%3A%5B%5B%7B%22id%22%3A%22Ryft%22%2C%22arg%22%3A%222acos(sqrt(1%2F3))%22%7D%5D%5D%7D%7D%2C%7B%22id%22%3A%22~gunh%22%2C%22name%22%3A%22%E2%88%9A2%2F3%22%2C%22circuit%22%3A%7B%22cols%22%3A%5B%5B%7B%22id%22%3A%22Ryft%22%2C%22arg%22%3A%222acos(sqrt(2%2F3))%22%7D%5D%5D%7D%7D%2C%7B%22id%22%3A%22~rtut%22%2C%22name%22%3A%22%E2%88%9A1%2F2%22%2C%22circuit%22%3A%7B%22cols%22%3A%5B%5B%7B%22id%22%3A%22Ryft%22%2C%22arg%22%3A%222acos(sqrt(1%2F2))%22%7D%5D%5D%7D%7D%2C%7B%22id%22%3A%22~nuje%22%2C%22name%22%3A%22%E2%88%9A6%2F28%22%2C%22circuit%22%3A%7B%22cols%22%3A%5B%5B%7B%22id%22%3A%22Ryft%22%2C%22arg%22%3A%222acos(sqrt(6%2F28))%22%7D%5D%5D%7D%7D%2C%7B%22id%22%3A%22~hlnc%22%2C%22name%22%3A%22%E2%88%9A16%2F22%22%2C%22circuit%22%3A%7B%22cols%22%3A%5B%5B%7B%22id%22%3A%22Ryft%22%2C%22arg%22%3A%222acos(sqrt(16%2F22))%22%7D%5D%5D%7D%7D%2C%7B%22id%22%3A%22~sj0p%22%2C%22name%22%3A%22U4%2C4%22%2C%22circuit%22%3A%7B%22cols%22%3A%5B%5B1%2C1%2C%22%E2%80%A2%22%2C%22X%22%5D%2C%5B1%2C1%2C%22~vl6q%22%2C%22%E2%80%A2%22%5D%2C%5B1%2C1%2C%22%E2%80%A2%22%2C%22X%22%5D%2C%5B1%2C%22%E2%80%A2%22%2C1%2C%22X%22%5D%2C%5B1%2C%22~tia4%22%2C%22%E2%80%A2%22%2C%22%E2%80%A2%22%5D%2C%5B1%2C%22%E2%80%A2%22%2C1%2C%22X%22%5D%2C%5B%22%E2%80%A2%22%2C1%2C1%2C%22X%22%5D%2C%5B%22~iel2%22%2C%22%E2%80%A2%22%2C1%2C%22%E2%80%A2%22%5D%2C%5B%22%E2%80%A2%22%2C1%2C1%2C%22X%22%5D%2C%5B1%2C%22%E2%80%A2%22%2C%22X%22%5D%2C%5B1%2C%22~fn9r%22%2C%22%E2%80%A2%22%5D%2C%5B1%2C%22%E2%80%A2%22%2C%22X%22%5D%2C%5B%22%E2%80%A2%22%2C1%2C%22X%22%5D%2C%5B%22~gunh%22%2C%22%E2%80%A2%22%2C%22%E2%80%A2%22%5D%2C%5B%22%E2%80%A2%22%2C1%2C%22X%22%5D%2C%5B%22%E2%80%A2%22%2C%22X%22%5D%2C%5B%22~rtut%22%2C%22%E2%80%A2%22%5D%2C%5B%22%E2%80%A2%22%2C%22X%22%5D%5D%7D%7D%5D%7D
}{interactive circuit},
where e.g. $\surd6/22$ is shorthand for a rotation of $R_y(2\cos^{-1}(\sqrt{6/22}))$.

\subsubsection{Improvement}
The original proposal of QAOA for Discrete Portfolion Rebalancing, displayed in Figure~\ref{fig:portfolio} (right), on the first glance shares some similarities:
First, Bell pairs between short and long positions give a superposition over bands $0,\ldots,n-d$. 
Secondly, Hamming-weight preserving mixing unitaries on registers $\ell$ and $s$ each 
are used to get superpositions over all portfolios in the respective bands.

However, there are two major differences: using Bell pairs to get a superposition over the bands results in a binomial weight distribution over the bands~\cite{hodson2019portfolio}, rather than a weight distribution that models the number of different portfolios in each band. 
Secondly -- and more importantly -- using Hamming-weight preserving mixing unitaries for the two registers prohibits a mixing between bands! Hence only the Grover mixing unitary can both restrict mixing to the feasible subspace \emph{and} provide transitions between all feasible states.

\subsubsection{Generalizations}
We believe this last example shows one strength of the GM-QAOA approach: Focusing on creating an equal superposition of all feasible states might be more amenable than the direct design of mixing unitaries. 
Furthermore, as a generalization one may drop the requirement of creating \emph{equal} superpositions of all feasible states 
and relax this to creating superpositions of all feasible states with a \emph{non-zero} amplitude for each feasible state.

\section{Conclusions}
We introduced the GM-QAOA framework for quantum algorithm optimization, which combines algorithmic principles from the Grover search algorithm and the Quantum Alternating Operator Ansatz. 
GM-QAOA mixer unitaries are straightforward to implement exactly -- and without any Hamiltonian simulation error -- compared to standard mixers from QAOA.

A review of applications for GM-QAOA has revealed multiple strengths:
Grover mixers can combine useful properties of existing mixers (such as implementability and good transition properties in the case of \textsc{Max-k-VertexCover}), 
reduce the circuit complexity compared to existing mixers (such as for \textsc{TSP}),
or even give rise to the first known mixing unitaries that both stay in the feasible subspace and provide transitions between all states therein (as in the case of Discrete Portfolio Optimization).

GM-QAOA requires an efficient method to create a superposition of all feasible solutions, without measurement.  
While not all optimization problems may have such circuits, a large number of interesting problems do, including the important Traveling Salesperson Problem and solution-size constraint maximization versions of many combinatorial problems, including $k$-Vertex Cover, $k$-Set Cover, etc. 

As an aside, we note that, as with many questions of quantum computational complexity nature, we do not have a complete understanding of which optimization problems fall into this class of allowing polynomial-time feasible solution superposition unitary operators, which is an interesting direction for future research. 
Other future directions include a numerical simulation assessment of our results and identifying ways to leverage the property of GM-QAOA of producing states where equal-quality solutions result in equal sampling probability.

%% arXiv bibliography styles
%\bibliographystyle{plainurl}
%\bibliography{gq-bib.bib}

%\newpage %(= column break)
\bibliographystyle{IEEEtran}
\bibliography{gq-bib.bib}

\end{document}